\theoremstyle{definition}
\newtheorem{definition}{Definition}
\newtheorem{lemma}{Lemma}
\newtheorem{theorem}{Theorem}
\newtheorem{corollary}{Corollary}
\newtheorem{remark}{Remark}
\newtheorem{properties}{Properties}
\newtheorem{problem}{Problem}
\newcommand{\prob}{\mathbb{P}}
\newcommand{\Ex}{\mathbb{E}}
\newcommand{\ve}[1]{\boldsymbol{#1}}
\newcommand{\R}{\mathbb{R}}
\newcommand{\bigone}{\mathds{1}}
\newcommand{\tr}{\text{tr}}
\newcommand{\defeq}{\triangleq}
\newcommand{\ess}{\text{ess}}
\newcommand{\diag}{\text{diag}}
\newcommand{\Hip}{\mathcal{H}}
\begin{document}
\title{Distributed Detection of a Random Process over a Multiple Access Channel under Energy and Bandwidth Constraints}

\author{{Juan Augusto Maya,~\IEEEmembership{Member,~IEEE,} Leonardo Rey Vega,~\IEEEmembership{Member,~IEEE,} and Cecilia G. Galarza}

\thanks{The three authors are with the University of Buenos Aires (UBA), Buenos Aires, Argentina. L. Rey Vega and C. Galarza are also with the CSC-CONICET, Buenos Aires, Argentina. 
Emails:  \{jmaya, lrey, cgalar\}@fi.uba.ar} %
\thanks{This work is an extension of the results previously presented in \cite{MayaSAM} and was partially supported by the Peruilh grant (UBA) and the project UBACYT 2002010200250.} %
}



\maketitle

\begin{abstract}
We analyze a binary hypothesis testing problem built on a wireless sensor network (WSN) for detecting a stationary random process distributed both in space and time with circularly-symmetric complex Gaussian distribution under the Neyman-Pearson framework.
Using an analog scheme, the sensors transmit different linear combinations of their measurements through a  multiple access channel (MAC) to reach the fusion center (FC), whose task is to decide whether the process is present or not. 
Considering an energy constraint on each node transmission and a limited amount of channel uses, we compute the miss error exponent of the proposed scheme using Large Deviation Theory (LDT) and show that the proposed strategy is asymptotically optimal (when the number of sensors approaches to infinity) among linear orthogonal schemes.
We also show that the proposed scheme obtains significant energy saving 
in the low signal-to-noise ratio regime, which is the typical scenario of WSNs. 
Finally, a Monte Carlo simulation of a 2-dimensional process in space validates the analytical results.   
 \end{abstract}

\begin{IEEEkeywords}
Distributed detection; energy and bandwidth constraints; multiple access channel; error exponent; wireless sensor networks;   
\end{IEEEkeywords}
\IEEEpeerreviewmaketitle

\section{Introduction}
\IEEEPARstart{D}{istributed}
detection based on wireless sensor networks (WSN) is a topic which has attracted great interest in recent years (see \cite{veeravalli-varshney-2012} and references therein).  A typical WSN has a large number of sensor nodes which are generally low-cost battery-powered devices with limited sensing, computing, and communication capabilities. Sensors acquire noisy measurements, perform simple data processing and propagate the information into the WSN to reach a decision about a physical phenomenon happening in the coverage area.  The data processing at each node and the propagation of the information within the network are key aspects to be considered to achieve the required performance level.

Network resources, such as energy and bandwidth, are scarce and expensive, but they are key variables when the design is focused on processing latency and detection performance. Clever data processing strategies are required to maximize performance under resources constraints. These constraints could be imposed on a node by node basis, or on the overall network. On any matter, appropriate choices of the processing strategy could largely impact on the total cost or on the life cycle of the WSN, and as such, habilitate its deployment on remote locations or not. 

\subsection{Related work}
Distributed detection theory has been much studied in the past. Starting with the seminal work of Tenney and Sandell \cite{Tenney_1981}, several results have been derived on how each node compiles the available information and communicates with the \emph{fusion center} (FC) where the final decision on the true state of nature is taken. Under this setup, digital transmission schemes, where appropriate quantization rules has to be designed has been dealt with \cite{VarshneyDistDet, TsitsiklisDescDet, Chamberland-Veeravalli-2003} (see also references therein). On the other hand, analog communication schemes were also studied in the past (see for example \cite{Veeravalli_2003, Costa_2003, ChamberlandVeeravalli2004}). For Gaussian networks, with independent and identically distributed (i.i.d.) Gaussian measurements \cite{GastparSourceChannel, Gastpar_2008}, it is known that a simple analog scheme as the scaling and transmission of the noisy measurement, is an optimal joint source-channel scheme in terms of 
quadratic distortion with power constraint in the sensors. Clearly, this can be viewed as a strong motivation, from a theoretical point of view, for further study of analog schemes for distributed detection problems.

The specific communication strategy from the sensor nodes to the FC has also been extensively studied. The simplest approach is to consider that sensors communicate with the FC using orthogonal parallel channels, like time division multiple access (TDMA) or frequency division multiple access (FDMA)\cite{Veeravalli_2003,Bianchi_2011}. Clearly, this could not be efficient for  large-scale wireless sensor networks where a large bandwidth is required for simultaneous transmissions or a large detection delay is necessary if sensors use the same bandwidth and transmit in different time slots. 

A more sophisticated approach is implemented with a multiple access channel (MAC) where sensors transmit simultaneously. In this case, the bandwidth requirement does not necessarily depend on the number of sensors. Transmission over a MAC is appealing because, for certain distributed detection problems, linear mixing of the sensor measurements, naturally performed in a MAC, could be efficiently exploited at the FC. For example, in the case of Gaussian measurements and analog transmissions, this coherent mixing could provide beamforming gains with significant impact on the performance and the energy consumed by the network. The use of a MAC channel for the problem of distributed detection has been studied in the past. In \cite{SayeedTBMA} linear mixing through a MAC is used for collaboratively computing the network-wide \emph{type} of the measurements (which are assumed to be i.i.d. given the state of nature) taken at each node. As the noise in the channel is asymptotically harmless when the number of 
sensors grows to infinity, and the type of a set of i.i.d. measurements is a \emph{sufficient statistic}, this strategy has the same optimal performance as the best \emph{centralized} scheme (see also \cite{Mergen_2007, Anandkumar2007}). The specific case in which the possible states of nature are determined by the presence or not of a deterministic signal was considered previously \cite{MayaBiasCorr, LiDai_DetSignalMAC, DasarathanTepen2014}. 

However, when the possible states of nature are the presence or absence of a random process, and the nodes use a MAC to communicate with the FC, the problem is more delicate. When the random process is i.i.d. across time and along the sensor nodes, the analog transmission of the log-likelihood ratio (LLR) of the measurement at each node is optimal in the sense that the asymptotic performance converges to that of the centralized detector. This is a  consequence of the fact that the network-wide global LLR is simply the sum of the marginal LLRs computed at each node. However, when the random process to be detected is correlated in space and/or time the situation is not so simple. In this case, the measurements taken at each node are not i.i.d. across time and space and the sum of the marginal LLRs (achieved through the coherent combining through the MAC) is clearly suboptimal, even when there is no fading \cite{CohenLLRMAC}. As the network-wide LLR is in general a non-linear function of the marginal LLRs, linear combining of the MAC 
cannot provide the optimal statistic to the FC, as in the case of i.i.d. data. This is also evident given that when each node sends only its marginal LLR, it is neglecting the correlation with measurements of other nodes. 

Problems with correlated observations could be considerably challenging \cite{willett2000good}. Design of distributed processing strategies to benefit from the correlation among data is, in general, an open problem. It is well known that signal correlation can help to improve the detection performance, specially when low quality sensor measurements are available \cite{ChamberlandVeeravalli2004,PoorLattice2D}. Clearly, a clever use of the correlation in space and/or time requires of cooperation among nodes. This problem has also been studied in previous works \cite{Sung_2007,Tong_2007}. However, the specific case of the distributed detection of a random process where the nodes communicate with the FC through a MAC still deserves some more study. 

\subsection{Contributions}

 We will consider a distributed detection scenario where all the sensors communicate with the FC through a MAC. Each sensor obtains a local measurement that is a realization of a Gaussian stochastic process arbitrarily correlated in space and possibly in time. Our goal will be to analyze the asymptotic performance of the detection scheme, when the number of sensors approaches to infinity. In particular, we look at the error exponents under the Neyman-Pearson scenario. We will consider several processing strategies that exploit correlation in space and time. These strategies can be briefly described as follows. On a first step, each sensor takes a single measurement or a set of measurements. Then, in a synchronous manner, all the sensors transmit different analog linear combinations of their measurements using several MAC uses. Finally, the FC gathers all the data  and constructs an appropriate statistic to make the decision. The cooperation among nodes is achieved through multiple channel uses. As these multiple uses are clearly pricey, we also impose that the sensors have a limited energy budget to be spent on these transmissions, and we carefully select the energy used in each channel use in order to comply with this budget. We obtain then the optimal energy allocation policy which depends on the statistical properties of the process to be detected. It is shown however that for a large number of nodes, knowledge of these statistics is needed only at the FC. 
As a side result, the optimal number of MAC uses is also obtained. It is shown, that in general the required number of  channel uses is not very large for general correlated processes. This implies that this form of cooperation does not impose severe penalties in bandwidth or delay, allowing for a close to optimal performance in terms of error exponents. The performance gains obtained with the proposed schemes are more important when the channel signal to noise ratio (SNR$_\text{C}$) is small, which is the usual scenario for WSNs designed for long life cycle. 
%
%
%
%
\subsection{Notation and Organization}
Vectors are written in boldface and matrices in capital letters. $A_n$ and $B_{nm}$ are  square and  rectangular matrices of sizes $n\times n$ and $n\times m$, respectively. $\det A_n$ and $\tr A_n$ are the determinant and the trace of $A_n$, respectively. $(\cdot)^T$ and $(\cdot)^H$ denote transpose and transpose conjugate. We do not distinguish between random variables and their realization values. $p(\ve{x}|\Hip)$ is the probability density function (p.d.f.) of $\ve{x}$ conditioned to $\Hip$. $\prob_i(\cdot)$ and $\Ex_i(\cdot)$ are the probability and the expectation, respectively, computed under hypothesis $\Hip_i$. The pre-image of the set $A$ through the function $\phi(\cdot)$ is $\phi^{-1}(A)= \{\nu\in \text{dom}(\phi) :\phi(\nu)\in A\}$. $\bigone(\nu \in A)$ is the indicator function, i.e., it is 1 if $\nu \in A$ and 0 otherwise.
The set of absolutely integrable and essentially bounded functions of support $[0,1]$ are, respectively, $L^1([0,1])=\{f: \int_{0}^{1} |f(\nu)|d\nu<\infty \}$ and $L^\infty([0,1])=\{f: \ess\sup |f(\nu)| <\infty \}$. 

The paper is organized as follows. In Section \ref{sec:DetProblem}, we formulate the binary hypothesis testing problem and describe the MAC between the sensors and the FC. In Section \ref{sec:Stats}, we present the centralized detector and then we develop two strategies for distributed detection. In Section \ref{tools}, we enunciate auxiliary tools used to compute the error exponents in Section \ref{sec:exponents}. In Section \ref{sec:results1D}, we show numerical results for one-dimensional networks and in Section \ref{sec:p-WSN} we extend  the results for multi-dimensional networks. In Section \ref{sec:Applications} we apply the results for detecting a 2D spatial random process and in Section \ref{sec:Conclusions} we elaborate on the main conclusions. Technical proofs are provided in the appendices.

\section{Detection Problem}
\label{sec:DetProblem}
We first consider a network where the nodes are distributed along a line and each sensor takes a single measurement. Networks with more dimensions (in space and/or time) will be considered in section \ref{sec:p-WSN}. The measurement at the $k$-th sensor under each hypothesis is:
\begin{equation}
\left\{
\begin{array}{llll}
\Hip_1: & x_k&=s_k + v_k,& \\
\Hip_0: & x_k&=v_k,& k=1,\dots,n.
\end{array}
\right.
\label{eq:modelo}
\end{equation} 
We assume that $s_k$ is a zero-mean circularly-symmetric complex Gaussian stationary process with variance $\sigma^2_s$ and power spectral density (PSD) $\phi(\nu)$, and $v_k$ is a zero-mean circularly-symmetric complex white Gaussian noise independent of $s_k$ with variance $\sigma_v^2$. Thus, $x_k$ is Gaussian distributed either under $\Hip_0$ or $\Hip_1$. We define the following column vectors: $\ve{s}=[s_1,\dots,s_n]^T$, $\ve{v}=[v_1,\dots, v_n]^T$ and $\ve{x}=[x_1,\dots, x_n]^T$.
The covariance matrix of $\ve{v}$ is $\sigma_v^2 I_n$ where $I_n$ is the identity matrix of dimension $n$. The signal vector $\ve{s}$ has a Toeplitz covariance matrix $\Sigma_n\defeq\Sigma_n(\phi)$ whose $(i,j)$-th element is completely characterized by 
$\phi(\nu)$ as,
\begin{equation*}
(\Sigma_n)_{i,j}=\int_0^1 \phi(\nu)e^{-\jmath 2\pi\nu(i-j)}d\nu\qquad 1\leq i,j\leq n.
\end{equation*}
where $\!\nu\!$ is the normalized frequency. 
The covariance matrices of $\!\ve{x}\!$ are $\Sigma_{0,n}\!\!=\!\!\sigma_v^2 I_n$ under $\Hip_0$, and $\Sigma_{1,n}\!\!=\!\!\Sigma_n\!+\!\sigma_v^2 I_n$ under $\Hip_1$. 

\begin{remark}
The Toeplitz assumption allows us to manage the correlation between the measurements at the nodes in a simple manner when the number of them grows unbounded. It has also been considered in \cite{LiDai_DetSignalMAC,ChamberlandVeeravalli2004}. Physically, it can be linked to a situation in which the sensor nodes are located on a regular grid and the continuous random process in space is stationary. More general  correlation models include Gaussian random fields \cite{Sung_2007,Tong_2007}. However, under the setup considered in this paper, 
interesting conclusions and closed form results would more involved. For that reason, we will work with processes described by Toeplitz covariance matrices, which allows us to consider sufficiently general correlation models.
\end{remark}

We analyze a sensor network where the nodes communicate with the FC through a MAC with equal gain on each link node-FC. 
This is a simplification for the general setup, however it is a reasonable approximation for networks
 deployed in rural and remote areas. In this case, each node has a line of sight with the FC, the nodes are steady, and the surroundings do not vary much. Under this scenario, node synchronization and channel inversion at the nodes are feasible. This assumption has been extensively considered in the past \cite{Tepeden2010DistEstMAC,Tepeden2012DistEstMAC, Tepen_2014,Goldenbaum2010ComputingMAC,SayeedTBMA,Evans2008DetMAC}. 

Consider now $n'$ channel uses\footnote{We will refer to channel uses or degrees of freedom (DoF) interchangeably.}, with $n'\leq n$. Each channel use is associated to either an orthogonal time slot or a frequency band. Then, the processing strategy may use $n'$ time slots, $n'$ frequency bands or a combination of them such that the product of time slots and frequency bands is $n'$. 
During a channel use, sensors communicate with the FC through a noisy MAC without any other interference. 
The signal collected at the FC is a noisy version of the coherent superposition of the symbols transmitted by the $n$ sensors through the MAC,
\begin{equation}
z_{k'}=\sum_{k=1}^{n} g_{k k'}(x_k) + w_{k'},\ \ k'=1,\dots,n', 
\label{zFC1}
\end{equation}
where $g_{k k'}(\cdot)$ is the encoding function used by the sensor $k$ at channel use $k'$, and $w_{k'}$ is the zero-mean communication noise with variance $\sigma_w^2$ and circularly-symmetric complex Gaussian distribution independent of everything else. An illustration of the distributed scheme is shown in Fig. \ref{fig:wsnDD}. 
\begin{figure}[hbt]
\centering
\includegraphics[width=\linewidth]{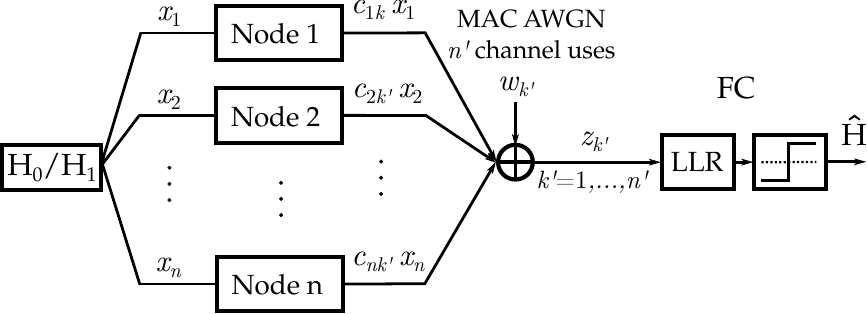}
\caption{Decentralized detection scheme for a WSN with linear encoding functions: $g_{kk'}(x_k)=c_{kk'} x_k$.}
\label{fig:wsnDD}
\end{figure}

\section{Detectors}
\label{sec:Stats}
\subsection{Centralized detector}  
Suppose that the FC has direct access to the complete measured vector $\ve{x}$ through $n$ orthogonal noiseless communication channels. The appropriate use of these measurements allows the FC to construct the optimal centralized detector (CD) \cite{Kay_SSP}. In a distributed setting, with noisy links from the sensors to the FC, this detector provides an upper bound (not necessarily tight) on the performance of any distributed scheme.
Consider the Neyman-Pearson problem for a fixed false alarm probability level $\alpha$, where a false alarm event occurs when $\Hip_1$ is declared but $\Hip_0$ is true. The associated normalized logarithmic likelihood ratio (LLR) is \cite{Kay_SSP}
\begin{align}
T_\text{c}^{n}(\ve{x}) &=\frac{1}{n} \log \frac{p(\ve{x}|\Hip_1)}{p(\ve{x}|\Hip_0)}\nonumber\\
&= \frac{1}{n}\left(\ve{x}^H \left(\Sigma_{0,n}^{-1}-\Sigma_{1,n}^{-1} \right)\ve{x}-  \log\frac{\det\Sigma_{1,n}}{\det\Sigma_{0,n}}\right).
\label{TC1}
\end{align} 
Now, according to the Neyman-Pearson theorem,  the optimum centralized test chooses $\Hip_1$ if $T_c^n(\ve{x}_n) > \tau_n$, and $\Hip_0$ otherwise, where the threshold of the test $\tau_n$ depends on $\alpha$. 

When the process $\{s_k\}$ is spatially correlated, the covariance matrix $\Sigma_{1,n}$ is not diagonal. Hence, (\ref{TC1}) cannot be expressed as the sum of the marginal LLRs from each node, and the mixing property of the MAC does not lead to the global LLR as in \cite{CohenLLRMAC}.
However, it is possible to implement simple distributed detection schemes to reconstruct, in some way, the centralized statistic at the FC through multiple channel uses.
\subsection{Distributed detector}
We consider distributed detection (DD) schemes where the nodes make $n'$ channel uses through a MAC.
In this work, we restrict ourselves to the case where $g_{kk'}(\cdot)$ are linear encoding functions. Thus, in the case of one-dimensional networks, each sensor transmits scaled versions of its local measurement through the MAC and (\ref{zFC1}) results in: 
\begin{equation*}
z_{k'}=\ve{c}_{k'}^H\ve{x}+w_{k'},\qquad k'=1,\dots,n', 
\label{zFC}
\end{equation*}
where $\boldsymbol c_{k'}=[c_{1k'}, c_{2k'} ,\dots, c_{nk'}]^T$. 
Define the \emph{precoding matrix} as $C_{nn'}=[\boldsymbol c_{1}, \dots,\boldsymbol c_{n'}]$ and $\ve{z}=[z_{1},\dots,z_{n'}]^T$. Then
\begin{equation}
\label{eq:zvec}
\ve{z}=C_{nn'}^H \ve{x} + \ve{w},
\end{equation}
where $\ve{w}=[w_{1},\dots,w_{n'}]^T$ has covariance matrix $\sigma^2_w I_{n'}$. Considering the Neyman-Pearson problem, the normalized LLR at the FC is:
\begin{align}
T_\text{d}^{n}(\ve{z}) &= \frac{1}{n}\log \frac{p(\ve{z}|\Hip_1)}{p(\ve{z}|\Hip_0)}\nonumber\\
&= \frac{1}{n}\left(\ve{z}^H \left(\Xi_{0,n'}^{-1}-\Xi_{1,n'}^{-1} \right)\ve{z} -  \log\frac{\det\Xi_{1,n'}}{\det\Xi_{0,n'}}\right),
\label{TD1}
\end{align} 
where $\Xi_{1,n'}= C_{nn'}^H\Sigma_n C_{nn'} + \sigma^2_v C_{nn'}^HC_{nn'}+ \sigma^2_w I_{n'}$, and $\Xi_{0,n'}=\sigma^2_v C_{nn'}^H C_{nn'} + \sigma^2_w I_{n'}$ 
are the covariance matrices of $\ve{z}$ under and $\Hip_1$ and $\Hip_0$, respectively.
The false alarm and the miss error probability are defined as $P_{fa}^n=\prob_0(T^{n}_d(\ve{z})>\tau_n)$ and $P_m^n=\prob_1(T^{n}_d(\ve{z})<\tau_n)$, respectively. 
%
The average energy consumed by the $k$-th node during $n'$ transmissions is:
\begin{equation}
E_{k}= \mathbb{E}\left[\sum_{k'=1}^{n'}|c_{kk'}x_k|^2 \right]\!\!=\mathbb{E}[|x_k|^2]\sum_{k'=1}^{n'}|c_{kk'}|^2,\, k=1,\dots,n, \nonumber
\label{eq:energy_constrI}
\end{equation}
where $c_{kk'}$ is the $(k,k')$-th element of $C_{nn'}$, $\mathbb{E}[|x_k|^2]=\sigma_v^2+p_1\sigma_s^2$, and $p_1$ is the \emph{a priori} probability of the state of nature $\Hip_1$. 
When $p_1$ is unknown, a natural upper-bound for $E_k$ is obtained by taking $p_1=1$.

The problem is then to obtain appropriate precoding matrices $C_{nn'}$ such that the constraints on energy and channel uses are satisfied in the Neyman-Pearson setting: 
\begin{problem}[Best linear precoding strategy]
Consider that $n$ sensors take measurements according to (\ref{eq:modelo}) and the communication model is (\ref{eq:zvec}). When the per-sensor average energy constraint is $E_t$, and the probability of false alarm is less than $\alpha$, the best linear precoding strategy is obtained by solving 
\begin{gather}
\inf_{C_{nn'}\in\mathcal{A}_{nn'}} P_m^n, \nonumber
\label{eq:optimal_prob}
\end{gather}
with 
$\mathcal{A}_{nn'}\!\!=\!\{C_{nn'}\!\!\in\mathbb{C}^{n\times n'}\!\!\!:\! P_{fa}^n\!\!\leq\!\alpha, \frac{1}{n}\tr(C_{nn'}C_{nn'}^H )\!\leq\! \frac{E_{t}}{\sigma_s^2+\sigma_v^2}\}$.
\end{problem}

This problem is not convex in $C_{nn'}$, and closed-form solutions are not readily available. Therefore, we restrict the precoding matrices $C_{nn'}$ to have orthogonal columns and consider WSNs with large number of nodes where 
it is relevant to compute the \emph{error exponent} of $P_m^n$ when $P_{fa}^n\leq \alpha$. Then, the problem that we attack is as follows:
\begin{problem}[Best asymptotic orthogonal precoding strategy]
\label{best_asymp_prob}
Consider that $n$ sensors take measurements according to (\ref{eq:modelo}) and they communicate with the FC through a MAC as in (\ref{eq:zvec}). Consider also that $\frac{n'}{n}\rightarrow \beta$ when  $n\rightarrow\infty$ for a given asymptotic fraction of DoF $\beta\in(0,1]$. When the per-sensor average energy constraint is $E_t$, and the level of false alarm is limited to $\alpha$, the best asymptotic orthogonal precoding strategy is obtained by solving
\begin{gather}
\sup_{\{C_{nn'}\}_{n=1}^\infty \in \{\mathcal{A}_{nn'}^\text{orth}\}_{n=1}^\infty }
-\lim_{n\rightarrow\infty}\frac{1}{n}\log P_m^n
\nonumber
\label{eq:asymp_optimal_prob}
\end{gather}
with $\mathcal{A}_{nn'}^{\text{orth}} = \{C_{nn'}\!\!\in\mathbb{C}^{n\times n'}\!\!: C_{nn'}=V_{nn'}\Delta_{n'},\ V_{nn'}^H V_{nn'} = I_{n'}, \Delta_{n'}=\diag(\gamma_{1}^n,\dots,\gamma_{n'}^n),\ P_{fa}^n\leq\alpha, \frac{1}{n}\sum_{k'=1}^{n'} (\gamma_{k'}^n)^2\leq \frac{E_{t}}{\sigma_s^2+\sigma_v^2}\}$, where 
the coefficients $\{\gamma_{k'}^n\}$ control the energy transmitted by the sensors on each channel use.  
\end{problem}

To tackle this problem, we consider two different parameterizations for the sequence of precoding matrices $\{C_{nn'}\}$. Before elaborating on them, we make the following definition:
\begin{definition}[Transmitted Modes Set]
\label{def:freqSet}
Let $\text{L}$ be the uniform probability measure defined on the interval $[0,1]$, 
and denote by $\text{L}[\phi^{-1}(A)]$ the measure of the set $\phi^{-1}(A)$. Define the complementary cumulate distribution function $\Omega(t)$ as the measure of the set of frequencies $\nu$ such that $\phi(\nu)$ is at least $t$, i.e., $\Omega(t)=\text{L}[\phi^{-1}((t,+\infty))]$. For $\beta\in(0,1]$, define 
$\Theta_\beta:=\{\nu\in[0,1]:\Omega(\phi(\nu))\leq \beta\}$ as the transmitted modes set of Lebesgue measure $\beta$ whose elements are the frequencies $\nu$ that take on the largest values of $\phi(\nu)$. 
\end{definition}


Now, to solve Problem \ref{best_asymp_prob}, we consider two strategies: 

\begin{definition}[Principal Component Strategy in a MAC channel, PCS-MAC]
\label{PCS-MAC}
Let $n\in\mathbb{N}$ and $n' \defeq n'(n)\leq n$. Assume that $\{\ve{u}_{k}\}_{k=1}^n$ is a basis of eigenvectors for $\Sigma_n$ and $\{\lambda_k^n\}_{k=1}^n$ the corresponding eigenvalues, where $\lambda_1^n\geq\lambda_2^n\geq \cdots\geq\lambda_n^n$. We choose the precoding matrix for the PCS-MAC strategy as 
\begin{equation}
C_{nn'}= U_{n n'} \Delta_{n'},
\label{PCS-matrix}
\end{equation}
where $U_{nn'}=[\ve{u}_{1},\dots,\ve{u}_{n'}]$, $\Delta_{n'}= \diag(\gamma_{1}^n,\dots,\gamma_{n'}^n)$ and 
$n'=\max\{k\in[1,n]: \Omega(\lambda_k^n)\leq \beta\}$.
\end{definition}
To implement this strategy, the rows of $C_{nn'}$ are required to be known at the corresponding nodes. For that, either each node should perform a local eigenvalue-eigenvector decomposition, or the FC should communicate the corresponding row of $C_{nn'}$ to each sensor through a feedback channel.
The first option entails a more expensive node deployment, and the second option involves the implementation of a high throughput multi-terminal communication link whose complexity would grow linearly as the number of nodes increases. 
To reduce complexity, 
we replace the basis of eigenvectors in PCS-MAC by the Fourier basis, and we propose the next strategy: 
\begin{definition}[Principal Frequencies Strategy in a MAC channel, PFS-MAC]
\label{PFS-MAC}
Consider the power spectral density (PSD) $\phi(\nu)$ and let $n\in\mathbb{N}$ and $n'\defeq n'(n)\leq n$. For each $n$, let $(j_1,j_2,\cdots,j_n)$ be a permutation of $\{1,2,\dots,n\}$ such that $\phi\left(\frac{ j_1-1}{n}\right)\geq\phi\left(\frac{ j_2-1}{n}\right)\geq \cdots \geq \phi\left(\frac{ j_n-1}{n}\right)$.  We choose the precoding matrix PFS-MAC strategy as
\begin{equation}
C_{nn'}=F_{nn'}\Delta_{n'}
\label{PFS-matrix}
\end{equation}
where $n'=\max\left\{k\!\in\![1,n]\!:  \!\Omega\left(\phi\left(({j_k^n-1})/{n}\right)\right) \leq \beta\right\}$, $F_{nn'}=[\ve{f}_{j_1},\dots,\ve{f}_{j_{n'}}]$ is a sub-matrix of the DFT matrix of order $n$, i.e.,
$\ve{f}_{k'}=[f_{1k'},f_{2k'},\dots, f_{nk'}]^T$ with $f_{kk'} = \frac{1}{\sqrt{n}}\exp(\jmath 2\pi (k-1)(k'-1)/n)$ for $k=1,\dots,n$, $k'=j_1,\dots,j_{n'}$ and $\Delta_{n'}= \diag(\gamma_{1}^n,\cdots,\gamma_{n'}^n)$.
\end{definition}
Under this strategy, at each channel use, the FC receives a noisy version of a given frequency bin of the DFT of the measurement vector. For that, the nodes are required to know the index number of the DFT bin to be transmitted only. This information is common to all nodes and it may be broadcasted by the FC on a low rate feedback channel.


Let us interpret the above definitions. For the proposed decentralized strategies, we have imposed that the number of channel uses be limited to $n'\leq n$. 
Using a MAC in each channel use, we can limit the number of channel uses without discarding any measurement of the sensors and distributing the transmitted energy more efficiently. 
We will see in Corollary \ref{cor:DDMiss} that PCS-MAC and PFS-MAC results in the the asymptotic transmission of the components of the spectrum $\phi(\nu)$ in the set $\Theta_\beta$. Intuitively, we see that this is the best we can do, as the large components of $\phi(\nu)$ are the \emph{more informative} about the state of nature $\mathcal{H}_1$, when the fraction of channel uses approaches to $\beta$. 

\begin{remark}
For PCS-MAC, $c_{kk'}=u_{kk'} \gamma_{k'}$ where $u_{kk'}$ is the $(k,k')$-th element of $U_{nn'}$, and the average energy consumed in each node is $E_k=(p_1\sigma_s^2 + \sigma_v^2)\sum_{k'=1}^{n'} |u_{kk'}|^2 (\gamma_{k'}^n)^2$. This depends on each sensor $k$. If we additionally consider the average over all the sensors, we obtain the energy constraint considered in Problem \ref{best_asymp_prob}:
\begin{equation}
\frac{1}{n}\sum_{k'=1}^{n'} (\gamma_{k'}^n)^2\leq \frac{E_{t}}{\sigma_s^2+\sigma_v^2}.\label{eq:trConst}
\end{equation}
On the other hand, in the case of PFS-MAC , $c_{kk'}=\frac{1}{\sqrt{n}} \gamma_{k'}^n e^{\jmath 2\pi (k-1)(k'-1)/n}$, and the average energy consumed in each node is $E_k=(p_1\sigma_s^2 + \sigma_v^2)\frac{1}{n} \sum_{k'=1}^{n'}(\gamma_{k'}^n)^2$. This is independent of the sensor $k$ and therefore (\ref{eq:trConst}) is actually an energy  constraint on \emph{each} single sensor. 
\end{remark} 
It is convenient to express $\gamma_{k'}^n$ as the sampled version of a function $\xi:[0,1]\rightarrow \R_{+}$, 
\begin{equation}
 \gamma_{k'}^n=\sqrt{\xi\left(\frac{k'-1}{n}\right)},\ \ k'=1,\dots,n',
\label{eq:gamma}
\end{equation}
where  $\xi(\cdot)$ is a Riemann integrable function defined as the asymptotic energy profile.
The average energy constraint when $n$ goes to infinity results in 
\begin{align}
\lim_{n\rightarrow\infty}\frac{1}{n}\sum_{k=1}^{n'}(\gamma_k^n)^2 
&= \int_{\Theta_\beta}\xi(\nu)d\nu \leq \frac{E_t}{\sigma^2_s+\sigma^2_v}.
\label{xiConstr}
\end{align}

\begin{remark}
A similar general setup was previously considered in \cite{Bianchi_2011}. However, the authors imposed certain restrictions to the overall model that were lessen in our setup. For instance, the communication channels between the nodes and the FC were orthogonal, instead of a MAC as it is analyzed here. It was also assumed that each sensor measured the same realization of a process under $\Hip_1$ disturbed by different noise realizations. That would have been the case of a random process maximally correlated in space. Our analysis allows for general correlation functions both in space and/or time. In addition and more importantly, by introducing the coefficients $\{ \gamma_{k'}^n\}$, we optimize the energy profile for a given energy budget and we find the optimum number of MAC uses to achieve the best error exponent among the orthogonal strategies.

\end{remark} 

\section{Preliminary Tools}
\label{tools}
\begin{definition}[Weak and Strong Norms]
Let $A$ be a Hermitian $n\times n$ matrix with eigenvalues $\{\lambda_k\}_{k=1}^n$, its weak (normalized Frobenius) and strong (spectral) norms are, respectively,
$$|A|=\left(\frac{1}{n} \sum_{i=1}^{n}\sum_{j=1}^{n} |a_{ij}|^2\right)^{\!\!1/2},\ \ \|A\|=\max_{1\leq k\leq n} |\lambda_k|.$$ 
\end{definition}
\begin{definition}[Wiener Class Functions]
A function $\phi(\nu)$ defined on the normalized frequency interval $[0,1]$ is said to be in the Wiener class if it
has a Fourier series with absolutely summable Fourier coefficients $a_k$, i.e.,  $\sum_{k=-\infty}^{\infty} |a_k|<\infty$ and 
\begin{equation}
\phi(\nu)=\sum_{k=-\infty}^{\infty} a_k e^{\jmath 2\pi \nu k},\, a_k=\int_{0}^{1} \phi(\nu) e^{-\jmath 2\pi \nu k}d\nu.
\label{eq:ParFourier}
\end{equation}
\end{definition}
\begin{definition}
\label{def:circ}
The circulant matrix $B_n\defeq B_n(\phi)$ generated by the samples of the function $\phi(\nu)$ with $\nu= \frac{i-1}{n}$, $i=1,\dots,n$, is completely specified by its first row $\ve{b}_n=[b_1^n,\dots,b_n^n]$, 
\begin{equation}
B_n(\phi)=\begin{bmatrix}
b_1^n & b_2^n & \dots  & b_n^n\\ 
b_n^n & b_1^n & \ddots & b_{n-1}^n  \\ 
 	  &\ddots & \ddots &    \\ 
b_{2}^n & b_3^n & \dots  & b_1^n
\end{bmatrix}
\label{eq:CircMat}
\end{equation}
where $b_k^n=\sum_{i=1}^{n} \phi({\scriptstyle\frac{i-1}{n}}) e^{\jmath \frac{2\pi}{n} (i-1) (k-1)},\, k=1,\dots,n.$
The eigenvalues of this matrix are given by $\left\{\phi({\scriptstyle\frac{i-1}{n}})\right\}_{i=1}^{n}$.
\end{definition}
\begin{definition}[Asymptotically Equivalent Matrices]
Two sequences of $n\times n$ matrices $\{A_n\}$ and $\{B_n\}$ are said to be asymptotically equivalent, $A_n\sim B_n$, if
\begin{enumerate}
\item $A_n$ and $B_n$ are uniformly bounded in strong norm, i.e.:
\begin{equation}
\|A_n\|,\|B_n\|\leq M < \infty, \qquad n=1,2,\dots
\end{equation}
\item $A_n - B_n$ converges to zero in weak norm as $n\rightarrow\infty$, i.e.:
\begin{equation}
\lim_{n\rightarrow\infty} |A_n - B_n|=0.
\end{equation}
\end{enumerate}
\end{definition}
\begin{lemma}[Asymptotically Equivalence of Toeplitz and Circulant Matrices]
\label{lem:equiv}
Let $A_n(\phi)=[a_{i-j}]_{1\leq i,j\leq n}$ be a Toeplitz matrix with the function $\phi(\nu)$ in the Wiener class related to $a_k$ as in (\ref{eq:ParFourier}) and let $B_n(\phi)$ be a circulant matrix as in (\ref{eq:CircMat}). Then, $A_n(\phi)$ and $B_n(\phi)$ are asymptotically equivalent. 
\end{lemma}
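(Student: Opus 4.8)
The plan is to verify directly the two defining conditions of asymptotic equivalence, along the lines of the classical Toeplitz/circulant comparison. For the uniform strong-norm bound, note that $\phi$ in the Wiener class gives $|\phi(\nu)|\leq\sum_{k=-\infty}^{\infty}|a_k|=:M<\infty$ for all $\nu\in[0,1]$, and $\phi$ real makes both $A_n(\phi)$ and $B_n(\phi)$ Hermitian. By Definition \ref{def:circ} the eigenvalues of $B_n(\phi)$ are exactly the samples $\{\phi((i-1)/n)\}_{i=1}^n$, so $\|B_n(\phi)\|\leq M$ for every $n$. For $A_n(\phi)$ I would use the quadratic-form identity, valid for any $\ve{x}\in\C^n$,
\begin{equation*}
\ve{x}^H A_n(\phi)\ve{x}=\int_0^1\phi(\nu)\Big|\sum_{k=1}^n x_k e^{\jmath 2\pi\nu k}\Big|^2 d\nu,
\end{equation*}
whose modulus is at most $M\int_0^1\big|\sum_k x_k e^{\jmath 2\pi\nu k}\big|^2 d\nu=M\|\ve{x}\|^2$ by Parseval; since $A_n(\phi)$ is Hermitian, $\|A_n(\phi)\|=\sup_{\|\ve{x}\|=1}|\ve{x}^H A_n(\phi)\ve{x}|\leq M$. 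Thus both sequences are uniformly bounded by $M$.

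For the weak-norm convergence, the key step is to identify the entries of the circulant matrix. Since $B_n(\phi)$ is circulant — hence diagonal in the DFT basis — with eigenvalues $\phi((i-1)/n)$, its $(i,j)$ entry is the $n$-point discrete-Fourier approximation of $a_{i-j}$; expanding $\phi$ via its absolutely convergent Fourier series (\ref{eq:ParFourier}) and collapsing the geometric sums of $n$-th roots of unity yields the aliasing identity that this entry equals $\sum_{q\in\mathbb{Z}}a_{i-j+qn}$, whereas the corresponding entry of $A_n(\phi)$ is $a_{i-j}$. Hence $(A_n(\phi)-B_n(\phi))_{ij}=-\sum_{q\neq 0}a_{i-j+qn}$ and, using $\big(\sum_{q\neq 0}|a_{i-j+qn}|\big)^2\leq M\sum_{q\neq 0}|a_{i-j+qn}|$,
\begin{equation*}
|A_n(\phi)-B_n(\phi)|^2=\frac1n\sum_{i,j=1}^n\Big|\sum_{q\neq 0}a_{i-j+qn}\Big|^2\leq\frac{M}{n}\sum_{i,j=1}^n\sum_{q\neq 0}|a_{i-j+qn}|.
\end{equation*}
Regrouping the double sum by the integer $k=i-j+qn$ — each difference $\ell=i-j$ occurs for $n-|\ell|$ pairs $(i,j)$, and for fixed $k$ at most two residues $\ell\equiv k\ (\mathrm{mod}\ n)$ lie in the admissible range — the right-hand side is bounded by $M\big(\sum_{|k|\geq n}|a_k|+\sum_{|k|\leq n-1}\tfrac{|k|}{n}|a_k|\big)$. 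The first term is the tail of the convergent series $\sum_k|a_k|$; the second vanishes as $n\to\infty$ by splitting the sum at a fixed cutoff $N$ and again invoking $\sum_k|a_k|<\infty$. Therefore $|A_n(\phi)-B_n(\phi)|\to 0$, which together with the strong-norm bound establishes $A_n(\phi)\sim B_n(\phi)$.

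I expect the only delicate point to be the bookkeeping in the last estimate: the entries where $A_n$ and $B_n$ differ appreciably are concentrated in the top-right and bottom-left corners, where the circulant ``wraps around'', so bounding the multiplicity $n-|\ell|$ crudely by $n$ would destroy the argument. Keeping the exact count — which is small precisely for those corner entries — is what makes the bound collapse to the two vanishing pieces. The remaining manipulations are routine handling of absolutely convergent series.
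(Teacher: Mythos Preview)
Your argument is correct and is precisely the classical computation that the paper defers to by citing \cite[p.~53]{GrayToeplitz}: the paper gives no independent proof, and what you have written is essentially Gray's proof --- uniform spectral bound via $|\phi|\leq\sum_k|a_k|$, aliasing identity $(B_n)_{ij}=\sum_{q\in\mathbb{Z}}a_{i-j+qn}$, and the weak-norm estimate that collapses to a tail plus a Ces\`aro-type term. One small remark: your aliasing identity tacitly assumes a $1/n$ normalization in the formula for $b_k^n$ that is missing in the paper's Definition~\ref{def:circ}; with the definition read literally the entries of $B_n(\phi)$ would be $O(n)$ and the lemma would fail, so you are (correctly) working with the intended normalization.
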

\begin{IEEEproof}
See \cite[p. 53]{GrayToeplitz}.
\end{IEEEproof}

\begin{theorem}[Asymptotic Toeplitz Distribution Theorem]
\label{ToepTheo}
Assume that $\phi(\nu)$ is a Wiener class function and $\Sigma_{n}$ the Hermitian Toeplitz matrix related to $\phi$. Let $\{\lambda_{k}^n\}_{k=1}^{n}$ be the eigenvalues of $\Sigma_{n}$, and $\delta_2 \geq \lambda_1^n \geq \cdots \geq \lambda_n^n \geq \delta_1$.
 If $F(\cdot)$ is a continuous function defined on the interval $[\delta_1 , \delta_2]$, such that $\int_{\nu:\phi(\nu)=\delta}F(\phi(\nu))d\nu =0$ for any $\delta\in [\delta_1 , \delta_2]\cap \Delta$ where $\Delta\subseteq [0,\infty)$, then
\begin{equation*}
\lim_{{n}\rightarrow\infty}\
\frac{1}{n}\sum_{k=1}^{n} F(\lambda_{k}^n )\bigone (\lambda_k^n\in \Delta) = \int_{\phi^{-1}(\Delta)} F(\phi(\nu))d\nu.
\end{equation*}
\end{theorem}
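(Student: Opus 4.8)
The plan is to deduce the statement from the classical Szeg\H{o} (Grenander--Szeg\H{o}) distribution theorem for \emph{continuous} test functions, and then to lift the continuity restriction by a sandwiching argument in which the level-set hypothesis on $F$ does all the real work.

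\emph{Step 1: the continuous-function version.} By Lemma~\ref{lem:equiv}, $\Sigma_n\sim B_n(\phi)$, and $B_n(\phi)$ has the explicit eigenvalues $\{\phi(\tfrac{i-1}{n})\}_{i=1}^n$. Asymptotic equivalence of uniformly bounded Hermitian sequences propagates to powers and hence to polynomials, giving $\frac1n\tr G(\Sigma_n)-\frac1n\tr G(B_n(\phi))\to 0$ for every polynomial $G$; since all $\lambda_k^n$ lie in the fixed compact interval $[\delta_1,\delta_2]$, a Weierstrass approximation upgrades this to every continuous $G$ on $[\delta_1,\delta_2]$ (see \cite{GrayToeplitz}). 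Combining with the Riemann-sum limit $\frac1n\sum_{i=1}^n G(\phi(\tfrac{i-1}{n}))\to\int_0^1 G(\phi(\nu))\,d\nu$ (valid because $G\circ\phi$ is continuous), I get
\[
\lim_{n\to\infty}\frac1n\sum_{k=1}^n G(\lambda_k^n)=\int_0^1 G(\phi(\nu))\,d\nu,\qquad G\in C([\delta_1,\delta_2]),
\]
i.e.\ the empirical spectral measure $\mu_n=\frac1n\sum_k\delta_{\lambda_k^n}$ converges weakly to $\mu:=\phi_*\text{L}$.

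\emph{Step 2: removing continuity.} The integrand of interest is $g(\lambda):=F(\lambda)\,\bigone(\lambda\in\Delta)$, which is bounded on $[\delta_1,\delta_2]$ but discontinuous only on $\partial\Delta$. After writing $F=F^+-F^-$ (the hypothesis passes to $F^\pm$, since $\int_{\nu:\phi(\nu)=\delta}F(\phi(\nu))\,d\nu=F(\delta)\,\text{L}[\phi^{-1}(\{\delta\})]=0$ forces $F(\delta)=0$ or $\text{L}[\phi^{-1}(\{\delta\})]=0$), I may assume $F\ge0$. For $\varepsilon>0$ I would construct continuous cutoffs $\chi_\varepsilon^-\le\bigone_\Delta\le\chi_\varepsilon^+$ on $[\delta_1,\delta_2]$, with $\chi_\varepsilon^+$ equal to $1$ on $\overline\Delta$ and supported in an $\varepsilon$-neighbourhood of $\overline\Delta$, and $\chi_\varepsilon^-$ supported in $\Delta$ and equal to $1$ on the $\varepsilon$-erosion of $\Delta$; then $\underline g_\varepsilon:=F\chi_\varepsilon^-$ and $\overline g_\varepsilon:=F\chi_\varepsilon^+$ are continuous with $\underline g_\varepsilon\le g\le\overline g_\varepsilon$. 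Applying Step 1 to $\underline g_\varepsilon$ and $\overline g_\varepsilon$ and sandwiching the partial sums yields
\[
\int_0^1\underline g_\varepsilon(\phi(\nu))\,d\nu\ \le\ \liminf_n\tfrac1n\!\sum_k g(\lambda_k^n)\ \le\ \limsup_n\tfrac1n\!\sum_k g(\lambda_k^n)\ \le\ \int_0^1\overline g_\varepsilon(\phi(\nu))\,d\nu .
\]
As $\varepsilon\downarrow0$, $\chi_\varepsilon^+(\phi(\nu))\downarrow\bigone_{\overline\Delta}(\phi(\nu))$ and $\chi_\varepsilon^-(\phi(\nu))\uparrow\bigone_{\mathrm{int}\,\Delta}(\phi(\nu))$, so by dominated convergence the outer bounds tend to $\int_0^1 g(\phi(\nu))\,d\nu$ up to an error $\int_{\phi^{-1}(\partial\Delta)}F(\phi(\nu))\,d\nu$, which vanishes by the hypothesis: $\phi^{-1}(\partial\Delta)$ is a disjoint union of level sets $\phi^{-1}(\{\delta\})$, only countably many of which are non-null, and each non-null one contributes $F(\delta)\,\text{L}[\phi^{-1}(\{\delta\})]=\int_{\nu:\phi(\nu)=\delta}F(\phi(\nu))\,d\nu=0$. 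Hence both bounds collapse to $\int_0^1 g(\phi(\nu))\,d\nu=\int_{\phi^{-1}(\Delta)}F(\phi(\nu))\,d\nu$, which is the assertion.

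\emph{Main obstacle.} Everything before the last step is routine once Lemma~\ref{lem:equiv} is in hand; the delicate point is controlling the boundary contribution and matching it exactly to the stated hypothesis — checking that the jump set of $\bigone(\cdot\in\Delta)$ is contained in $\partial\Delta$, that the neighbourhood/erosion families shrink monotonically to $\overline\Delta$ and $\mathrm{int}\,\Delta$ so dominated convergence applies, and that $\int_{\phi^{-1}(\partial\Delta)}F(\phi)\,d\nu$ splits over level sets. This last point is transparent when $\partial\Delta$ is finite (the case in every intended application, where $\Delta$ is an interval or a finite union of intervals), and the hypothesis is precisely what neutralizes any ``flat'' portion of $\phi$ sitting over $\partial\Delta$.
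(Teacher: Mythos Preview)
Your argument is correct and substantially more detailed than the paper's own proof, which consists of the single line ``The proof follows from \cite[Corollary~4.1]{GrayToeplitz}.'' Your Step~1 reproduces exactly that corollary (the continuous-$G$ Szeg\H{o} theorem via asymptotic equivalence with circulants and Weierstrass approximation), and your Step~2 is the standard Portmanteau-type sandwiching that upgrades weak convergence of the empirical spectral measures to bounded test functions whose discontinuity set is $\mu$-null---precisely the extension the paper invokes by adjoining the indicator $\bigone(\cdot\in\Delta)$ but does not spell out.

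One minor caveat worth flagging: the level-set hypothesis in the theorem is stated for $\delta\in[\delta_1,\delta_2]\cap\Delta$, whereas your sandwiching actually uses it at $\delta\in\partial\Delta$, and for open $\Delta$ these sets can be disjoint. This is a wording issue in the statement rather than a defect in your method; in every application in the paper $\Delta$ is an interval (or $[\delta_1,\delta_2]$ itself), so $\partial\Delta$ is finite and your closing remark already handles it.
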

\begin{IEEEproof}
The proof follows from \cite[Corollary 4.1]{GrayToeplitz}. 
\end{IEEEproof}

\begin{definition}[Large Deviation Principle, LDP]
\label{def:LDP}
If $G^\circ $ and $\bar{G}$ are the interior and closure of a set $G\subset\R$, respectively, we say that the sequence of random variables $\{Y_n\}_{n=1}^\infty$ satisfies the LDP with rate function $\Lambda^*(x)$ if, for any $G\subset\R$ we have  
\begin{align}
&-\inf_{x\in G^o} \Lambda^*(x) \leq \lim\inf_{n\rightarrow\infty}\frac{1}{n} \log \prob(Y_n\in G)\nonumber\\
&\leq \lim\sup_{n\rightarrow\infty}\frac{1}{n} \log \prob(Y_n\in G)\leq -\inf_{x\in \bar{G}} \Lambda^*(x).
\label{eq:LDP}
\end{align}
\end{definition}
The set $G$ is said to satisfy the $\Lambda$-continuous property if $\inf_{x\in G^\circ}\Lambda^*(x) =\inf_{x\in \bar{G}}\Lambda^*(x)$. Thus, the lower and upper bounds in (\ref{eq:LDP}) coincide, and the exponent $\epsilon_G$ is defined as:
\begin{equation}
\epsilon_G= \lim_{n\rightarrow\infty}-\frac{1}{n} \log \prob(
Y_n\in G) = \inf_{x\in G^o} \Lambda^*(x)=\inf_{x\in \bar{G}} \Lambda^*(x).
\label{LDT4}
\end{equation}
The Gärtner-Ellis theorem \cite{DemboLDT} is a handy result for the computation of a good rate function for a general sequence of random variables. For this result to be true, certain conditions on the asymptotic logarithmic moment-{gen\-er\-at\-ing} function (LMGF), defined as the limit $\Lambda(t) =\lim_{n\rightarrow\infty} \frac{1}{n}\Lambda^n(nt)$, with $\Lambda^{n}(t) = \log \Ex[e^{tY_n}]$ need to be satisfied. A particular critical condition is the steepness of $\Lambda(t)$ at the boundary of its domain. When $Y_n$ is a sequence of Gaussian quadratic forms constructed from a stationary Gaussian random process (which is the case for $T_\text{d}^{n}(\ve{z})$ and $T_\text{c}^{n}(\ve{x})$) this steepness condition is a delicate issue \cite{Bercu1997QuadraticGaussian}. The asymptotic \emph{bad behavior} of some eigenvalues of the corresponding Toeplitz matrices of the stationary process could have a critical role in the behavior of $\Lambda(t)$. For that reason, and for the particular case of the hypothesis testing problem based on likelihood ratio test, we need the following result:
\begin{theorem}[Modified Gärtner-Ellis theorem for Gaussian LLR]
\label{GETheo}
Let $\{T^n\}_{n=1}^{\infty}$ be a sequence of LLRs of complex circularly symmetric Gaussian random variables drawn from a stationary Gaussian random process with spectral density $h_i(\nu)$ under $\Hip_i$, $i=0,1$, and define its LMGF as 
\begin{equation}
\Lambda_i(t) =\lim_{n\rightarrow\infty} \frac{1}{n}\Lambda_i^n(nt), \ \ \Lambda_i^{n}(t) = \log \Ex_i[e^{tT^n}].
\label{LDT_H0}
\end{equation}
Consider the following assumptions:
\begin{itemize}
\item[$A_1)$] $h_i(\nu)$ is in the Szegö class, i.e., $\log h_i(\nu)\in L^1([0,1])$, $i=0,1$. 
\item[$A_2)$] The ratio of spectral densities is essentially bounded, i.e., $\frac{h_i(\nu)}{h_{\setminus i}(\nu)}\in L^\infty([0,1])$, where $\setminus i=1$ if $i=0$ and $\setminus i=0$ if $i=1$.
\end{itemize}
Then, under $\Hip_i$, the sequence $\{T^n\}$ satisfies the LDP in Def. \ref{def:LDP} with good rate function $\Lambda_i^*(x)$ computed through the Fenchel-Legendre transform of $\Lambda_i(t)$, $i=0,1$:
\begin{align}
\Lambda_i^*(x) = \sup_{t\in\R} \{xt-\Lambda_i(t)\},
\label{FLT:Hi}
\end{align}
with $\Lambda_i(t)=-{\displaystyle\int_{0}^1} \!\left\{\log\left(1+t \frac{h_{0}(\nu)-h_1(\nu)}{h_{\setminus i}(\nu)} \right) +t\log\frac{h_1(\nu)}{h_{0}(\nu)}\right\}d\nu.
$
\end{theorem}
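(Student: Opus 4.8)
\emph{Overview.} The plan has three stages: (i) evaluate the finite-$n$ LMGF $\Lambda_i^n(nt)$ in closed form, exploiting that $T^n$ is a Gaussian quadratic form; (ii) pass to the limit using the Toeplitz/circulant machinery of Section~\ref{tools}; and (iii) feed the limit $\Lambda_i$ into a G\"artner--Ellis-type argument, the non-classical part being the boundary behavior of $\Lambda_i$. Throughout, let $\Sigma_{i,n}$ denote the $n\times n$ Hermitian Toeplitz covariance matrix of the process under $\Hip_i$ (symbol $h_i$).

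\emph{Exact LMGF.} Exactly as in (\ref{TC1}), $T^n=\tfrac1n\bigl(\ve{x}^H M_n\ve{x}-\log(\det\Sigma_{1,n}/\det\Sigma_{0,n})\bigr)$ with $M_n=\Sigma_{0,n}^{-1}-\Sigma_{1,n}^{-1}$ and $\ve{x}\sim\CN(\ve{0},\Sigma_{i,n})$ under $\Hip_i$. The moment generating function of a Hermitian quadratic form of a circularly symmetric complex Gaussian vector gives $\Ex_i[e^{s\ve{x}^H M_n\ve{x}}]=\det(I_n-s\,\Sigma_{i,n}M_n)^{-1}$ whenever $I_n-s\,\Sigma_{i,n}M_n\succ0$; since $\Sigma_{i,n}M_n$ is similar to the Hermitian $\Sigma_{i,n}^{1/2}M_n\Sigma_{i,n}^{1/2}$ this is well defined, and
\begin{equation*}
\tfrac1n\Lambda_i^n(nt)=-\tfrac1n\log\det\!\bigl(I_n-t\,\Sigma_{i,n}M_n\bigr)-\tfrac{t}{n}\log\tfrac{\det\Sigma_{1,n}}{\det\Sigma_{0,n}}.
\end{equation*}
A Rayleigh-quotient comparison shows $\|\Sigma_{i,n}\Sigma_{\setminus i,n}^{-1}\|\le\|h_i/h_{\setminus i}\|_\infty$ uniformly in $n$, so by $A_2$ the matrix $\Sigma_{i,n}M_n$ is uniformly bounded in strong norm and the identity holds, for every $n$, on a fixed interval around the origin (no truncation needed).

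\emph{The limit.} Put $\psi_i(\nu)=(h_0(\nu)-h_1(\nu))/h_{\setminus i}(\nu)$. By Lemma~\ref{lem:equiv} together with the product/inverse calculus for asymptotically equivalent sequences \cite{GrayToeplitz} and the uniform bound above, $\Sigma_{i,n}M_n$ is asymptotically equivalent to the circulant with symbol $h_i(h_0^{-1}-h_1^{-1})=-\psi_i$; hence the eigenvalues of $I_n-t\,\Sigma_{i,n}M_n$ asymptotically distribute as $1+t\psi_i(\nu)$, which by $A_2$ is bounded and, for $t$ in the interior of $\{t:1+t\psi_i>0\text{ a.e.}\}$, bounded away from $0$. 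Theorem~\ref{ToepTheo} (in its product/circulant form, i.e.\ Szeg\H{o}'s first limit theorem) with $F=\log$ then gives $\tfrac1n\log\det(I_n-t\,\Sigma_{i,n}M_n)\to\int_0^1\log(1+t\psi_i(\nu))\,d\nu$, while $A_1$ and Szeg\H{o}'s first limit theorem give $\tfrac1n\log\det\Sigma_{i,n}\to\int_0^1\log h_i(\nu)\,d\nu$. Combining, $\tfrac1n\Lambda_i^n(nt)\to\Lambda_i(t)$, which is exactly the displayed expression; convexity of $\Lambda_i$ (the integrand is convex in $t$) and lower semicontinuity are immediate, and $0$ is interior to $\operatorname{dom}\Lambda_i$ since $\psi_i\in L^\infty$.

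\emph{The LDP, and the main obstacle.} On the interior of its domain $\Lambda_i$ is $C^1$ (differentiate under the integral, the derivative being dominated using $A_2$ and, for the $\log(h_1/h_0)\in L^1$ term, $A_1$). It remains to obtain the LDP of Def.~\ref{def:LDP}: the Chernoff bound applied to the exact LMGF yields the upper bound with rate $\Lambda_i^*$, and an exponential change of measure (tilting by $t$, under which $T^n$ concentrates at $\Lambda_i'(t)$) yields the matching lower bound for thresholds in the range of $\Lambda_i'$ on the interior. The genuinely delicate point --- the reason a \emph{modified} statement is needed, as flagged after the theorem --- is that at a finite endpoint $t^\partial$ of $\operatorname{dom}\Lambda_i$ the eigenvalues of $I_n-t\,\Sigma_{i,n}M_n$ can accumulate at $0$ in an uncontrolled way, so the classical steepness hypothesis of G\"artner--Ellis may fail; the lower bound beyond $\lim_{t\to t^\partial}\Lambda_i'(t)$ must then be argued directly on the quadratic forms, and here one invokes the large-deviation analysis of quadratic functionals of stationary Gaussian processes in \cite{Bercu1997QuadraticGaussian}, whose hypotheses are exactly guaranteed by $A_1$--$A_2$, to conclude the full LDP with rate function the Fenchel--Legendre transform (\ref{FLT:Hi}). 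Goodness of $\Lambda_i^*$ (compact sublevel sets) follows because $0\in\operatorname{int}\operatorname{dom}\Lambda_i$. I expect this last step --- taming the edge-of-spectrum behavior of the Toeplitz/circulant matrices so that the plain G\"artner--Ellis theorem may be replaced by its quadratic-Gaussian counterpart --- to be the crux; the first two stages are routine given Section~\ref{tools} and the Gaussian quadratic-form MGF.
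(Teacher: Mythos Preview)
Your proposal is correct and ultimately coincides with the paper's approach: the paper's entire proof is the citation ``See \cite[Proposition 7]{Bercu1997QuadraticGaussian}'', and your sketch---after correctly computing the finite-$n$ LMGF of the Gaussian quadratic form and passing to the limit via the Toeplitz/circulant asymptotics of Section~\ref{tools}---defers to precisely the same reference for the delicate non-steep LDP step. The extra detail you supply on the LMGF and its Szeg\H{o}-type limit is sound and more informative than what the paper provides, but the core argument is identical.
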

\begin{IEEEproof}
See \cite[Proposition 7]{Bercu1997QuadraticGaussian}.
\end{IEEEproof}

\section{Error Exponents}
\label{sec:exponents}
We are now ready to compute the error exponents for both  the centralized detector and the decentralized detector considering both strategies, PCS-MAC and PFS-MAC. Before proceeding, we formulate some handy definitions. 
Consider a spectral density $\Gamma\defeq \Gamma(\nu)$, a set of frequencies $\Theta$, and a threshold $\tau$. Then, we define the following functionals:
\begin{align}
m_0(\Gamma) &=\int_{\Theta}\left\{\frac{\Gamma(\nu)}{1+\Gamma(\nu)} -\log(1+\Gamma(\nu))\right\}d\nu,\label{eq:m0}\\
m_1(\Gamma) &=\int_{\Theta}\{ \Gamma(\nu) -\log(1+\Gamma(\nu))\}d\nu.\label{eq:m1}
\end{align}
\begin{align}
\kappa_{fa}(\Gamma)&= \int_{\Theta}\{ \log(1-t^*\Gamma(\nu)) + t^*\log(1+\Gamma(\nu))\}d\nu\nonumber\\
&+ \tau(1+t^*),\nonumber\\ 
\kappa_{m}(\Gamma)&= \int_{\Theta}\{ \log(1-t^*\Gamma(\nu)) + t^*\log(1+\Gamma(\nu))\}d\nu\nonumber\\
&+\tau t^*,\nonumber
\end{align}
where $t^*$ is the unique solution to
\begin{equation}
\label{eq:t}
\tau +  \int_{\Theta}\log(1+\Gamma(\nu))d\nu = \int_{\Theta}\frac{\Gamma(\nu)}{1-t^{*}\,\Gamma(\nu)}d\nu.
\end{equation}

\subsection{Centralized Error Exponents}
\begin{theorem}[CD Error Exponents]
\label{CDexp}
Consider the LLR test in (\ref{TC1}) with a fixed threshold $\tau$, the spectral density $\Gamma_\text{CD}\defeq \Gamma_\text{CD}(\nu)= \frac{\phi(\nu)}{\sigma_v^2}$, $\Theta=[0,1]$ and assume that $\phi(\nu)$ is in the Wiener class. Then, the false alarm and miss error exponents for the hypothesis testing problem in (\ref{eq:modelo}) are:
\begin{equation}
\label{CDfa}
\kappa_{fa}^\text{CD}= 
\left\{
\begin{array}{cl}
\kappa_{fa}(\Gamma_\text{CD}) & \mbox{if } \tau>m_0(\Gamma_\text{CD}),\\
0 &  \mbox{if } \tau\leq m_0(\Gamma_{\text{CD}}),
\end{array}\right.
\end{equation}
\begin{equation}
\label{CDm}
\kappa_{m}^\text{CD}= 
\left\{
\begin{array}{cl}
\kappa_{m}(\Gamma_\text{CD}), & \mbox{if } \tau<m_1(\Gamma_\text{CD}),\\
0 &  \mbox{if } \tau\geq m_1(\Gamma_{\text{CD}}),
\end{array}
\right.
\end{equation}
\end{theorem}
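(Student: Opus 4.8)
The plan is to specialize the modified G\"artner--Ellis theorem (Theorem~\ref{GETheo}) to the sequence $\{T_\text{c}^n(\ve{x})\}$ under each hypothesis, and then read off the two exponents from the resulting large deviation rate functions by a standard convex--analysis argument. First I would check the hypotheses $A_1,A_2$ of Theorem~\ref{GETheo}. Under $\Hip_0$, $\ve{x}$ is drawn from a stationary Gaussian process with flat spectral density $h_0(\nu)=\sigma_v^2$, and under $\Hip_1$ with spectral density $h_1(\nu)=\phi(\nu)+\sigma_v^2$. Since $\phi$ is Wiener class it is continuous, hence bounded on $[0,1]$, and $\phi\ge 0$; thus $\sigma_v^2\le h_1(\nu)\le \sigma_v^2+\|\phi\|_\infty$, so $\log h_i\in L^1([0,1])$ and $h_i/h_{\setminus i}\in L^\infty([0,1])$. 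Substituting $h_0,h_1$ into the formula of Theorem~\ref{GETheo} and writing $\Gamma\defeq\Gamma_\text{CD}(\nu)=\phi(\nu)/\sigma_v^2$ (so that $(h_0-h_1)/h_0=-\Gamma$, $(h_0-h_1)/h_1=-\Gamma/(1+\Gamma)$ and $\log(h_1/h_0)=\log(1+\Gamma)$) yields
\begin{align}
\Lambda_0(t) &= -\int_0^1 \left\{\log\left(1 - t\tfrac{\Gamma(\nu)}{1+\Gamma(\nu)}\right) + t\log(1+\Gamma(\nu))\right\}d\nu,\nonumber\\
\Lambda_1(t) &= -\int_0^1 \left\{\log(1 - t\Gamma(\nu)) + t\log(1+\Gamma(\nu))\right\}d\nu,\nonumber
\end{align}
so $\{T_\text{c}^n\}$ satisfies the LDP with good rate function $\Lambda_i^*$ under $\Hip_i$, $i=0,1$.

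Next I would extract the exponents. The false alarm probability is $\prob_0(T_\text{c}^n(\ve{x})>\tau)=\prob_0(T_\text{c}^n\in G)$ with $G=(\tau,\infty)$, and the miss probability is $\prob_1(T_\text{c}^n\in G')$ with $G'=(-\infty,\tau)$. Each $\Lambda_i$ is convex and, differentiating under the integral, $\Lambda_0'(0)=\int_0^1\{\tfrac{\Gamma}{1+\Gamma}-\log(1+\Gamma)\}d\nu=m_0(\Gamma)$ and $\Lambda_1'(0)=\int_0^1\{\Gamma-\log(1+\Gamma)\}d\nu=m_1(\Gamma)$ (here $\Theta=[0,1]$), which are precisely the asymptotic means $\lim_n \Ex_i[T_\text{c}^n]$, with $m_0(\Gamma)\le 0\le m_1(\Gamma)$. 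Hence each $\Lambda_i^*$ is convex, vanishes exactly at $m_i(\Gamma)$, is nonincreasing to its left and nondecreasing to its right, and is continuous on the interior of its (nondegenerate, since $T_\text{c}^n$ is asymptotically bounded below by $-\int_0^1\log(1+\Gamma)d\nu<m_i(\Gamma)$) effective domain. Therefore $G$ and $G'$ are $\Lambda_0$- and $\Lambda_1$-continuous respectively, and
\begin{align}
\kappa_{fa}^\text{CD} &= \inf_{x>\tau}\Lambda_0^*(x) = \Lambda_0^*(\tau)\ \text{if } \tau>m_0(\Gamma),\quad 0 \ \text{if } \tau\le m_0(\Gamma),\nonumber\\
\kappa_{m}^\text{CD} &= \inf_{x<\tau}\Lambda_1^*(x) = \Lambda_1^*(\tau)\ \text{if } \tau<m_1(\Gamma),\quad 0 \ \text{if } \tau\ge m_1(\Gamma).\nonumber
\end{align}

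It then remains to identify $\Lambda_i^*(\tau)=\sup_{t}\{\tau t-\Lambda_i(t)\}$ with the functionals $\kappa_{fa}(\Gamma_\text{CD})$, $\kappa_m(\Gamma_\text{CD})$. For $\Hip_1$ the optimizer $t^*$ solves $\Lambda_1'(t^*)=\tau$, i.e. $\int_0^1\frac{\Gamma}{1-t^*\Gamma}d\nu=\tau+\int_0^1\log(1+\Gamma)d\nu$, which is exactly (\ref{eq:t}) with $\Theta=[0,1]$; substituting back gives $\Lambda_1^*(\tau)=\tau t^*+\int_0^1\{\log(1-t^*\Gamma)+t^*\log(1+\Gamma)\}d\nu=\kappa_m(\Gamma_\text{CD})$, and $\tau<m_1(\Gamma)=\Lambda_1'(0)$ forces $t^*<0$. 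For $\Hip_0$ the optimizer $\tilde t$ solves $\Lambda_0'(\tilde t)=\tau$; simplifying $\frac{\Gamma/(1+\Gamma)}{1-\tilde t\,\Gamma/(1+\Gamma)}=\frac{\Gamma}{1-(\tilde t-1)\Gamma}$, this becomes (\ref{eq:t}) under the change of variable $t^*=\tilde t-1$. Using the identity $1-\tilde t\tfrac{\Gamma}{1+\Gamma}=\frac{1-t^*\Gamma}{1+\Gamma}$, i.e. $\log(1-\tilde t\tfrac{\Gamma}{1+\Gamma})=\log(1-t^*\Gamma)-\log(1+\Gamma)$, and $\tilde t=1+t^*$, the Legendre value telescopes: $\Lambda_0^*(\tau)=\tau(1+t^*)+\int_0^1\{\log(1-t^*\Gamma)+t^*\log(1+\Gamma)\}d\nu=\kappa_{fa}(\Gamma_\text{CD})$, while $\tau>m_0(\Gamma)=\Lambda_0'(0)$ is equivalent to $\tilde t>0$, i.e. $t^*>-1$, the regime in which the (unique, by strict monotonicity of $t\mapsto\int_0^1\frac{\Gamma}{1-t\Gamma}d\nu$) root of (\ref{eq:t}) parametrizes $\kappa_{fa}$.

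The genuinely delicate point --- the LDP itself with a \emph{good} rate function, as opposed to merely matching Legendre upper and lower bounds --- is already supplied by Theorem~\ref{GETheo}, which absorbs the steepness issue caused by the extreme eigenvalues of the Toeplitz matrices. What one still must argue carefully is (i) that the relevant stationary point of $\Lambda_i$ sits in the \emph{interior} of its domain, equivalently that $\tau$ lies in the range of $\Lambda_i'$, which uses a mild regularity of $\phi$ near the values where it attains its extrema (or else a restriction of the admissible $\tau$: below the asymptotic essential infimum $-\int_0^1\log(1+\Gamma)d\nu$ of $T_\text{c}^n$ the miss probability is asymptotically zero and $\kappa_m^\text{CD}$ is trivially infinite); and (ii) the reduction of the infimum of $\Lambda_i^*$ over the half-line to its value at $\tau$ (or to $0$), which follows from continuity of the rate function at $\tau$ and its monotonicity around the zero $m_i(\Gamma)$. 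I expect (i) to be the main obstacle; (ii) is routine convex analysis, so the remainder of the argument is the bookkeeping of specializing Theorem~\ref{GETheo} and carrying out the $\tilde t=1+t^*$ substitution.
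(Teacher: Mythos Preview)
Your proof is correct and follows essentially the same route as the paper: verify hypotheses $A_1,A_2$ of Theorem~\ref{GETheo} for $h_0=\sigma_v^2$, $h_1=\phi+\sigma_v^2$, invoke the resulting LDP, and identify $\Lambda_i^*(\tau)$ with $\kappa_{fa},\kappa_m$ via the shift $\tilde t=1+t^*$, which is exactly the paper's property $\Lambda_{\text{CD},0}(t)=\Lambda_{\text{CD},1}(t-1)$ (equivalently $\Lambda_0^*(x)=\Lambda_1^*(x)+x$). The paper organizes the convex analysis as a list of properties $(P_1)$--$(P_5)$ rather than your direct computation, but the argument is the same.
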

\begin{proof}
See App. \ref{App:ProofCD}.
\end{proof}

\begin{corollary}
Consider the LLR test in (\ref{TC1}). The miss error exponent subject to $P_{fa}^n\leq \alpha$ with $\alpha\in(0,1)$ is
\begin{equation}
\kappa_{m,\alpha}^\text{CD}\!=\!\int_0^1 \!\!\left\{\frac{1}{1+\Gamma_\text{CD}(\nu)} + \log(1+\Gamma_\text{CD}(\nu)) -1 \right\}d\nu.
\label{CDmalpha}
\end{equation}
\end{corollary}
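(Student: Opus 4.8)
The plan is to obtain (\ref{CDmalpha}) from Theorem~\ref{CDexp} by optimizing the threshold $\tau$ over those values for which the level constraint $P_{fa}^n\le\alpha$ is asymptotically feasible, and then to evaluate the optimal miss exponent in closed form. The first step is to identify the feasible thresholds. By Theorem~\ref{GETheo}, under $\Hip_0$ the sequence $T_\text{c}^{n}(\ve{x})$ obeys an LDP whose good rate function $\Lambda_0^*$ vanishes only at the asymptotic mean $\Lambda_0'(0)=m_0(\Gamma_\text{CD})$ (the same concentration also follows from Theorem~\ref{ToepTheo}). Hence $P_{fa}^n=\prob_0(T_\text{c}^n(\ve{x})>\tau)\to 0$ whenever $\tau>m_0(\Gamma_\text{CD})$ and $P_{fa}^n\to 1$ whenever $\tau<m_0(\Gamma_\text{CD})$; consequently, for any fixed $\alpha\in(0,1)$, the level constraint restricts the admissible thresholds to $\tau\ge m_0(\Gamma_\text{CD})$, the boundary value being immaterial for the exponent.

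Next I would show that the miss exponent is largest as $\tau\downarrow m_0(\Gamma_\text{CD})$. By (\ref{eq:m0})--(\ref{eq:m1}) one has $m_0(\Gamma_\text{CD})<0<m_1(\Gamma_\text{CD})$, so the interval $\tau\in(m_0(\Gamma_\text{CD}),m_1(\Gamma_\text{CD}))$ is non-empty, and there Theorem~\ref{CDexp} yields the miss exponent $\kappa_m(\Gamma_\text{CD})$. This quantity is nothing but the value $\Lambda_1^*(\tau)$ of the convex good rate function under $\Hip_1$ at a point to the left of its minimizer $m_1(\Gamma_\text{CD})$, hence non-increasing in $\tau$, while for $\tau\ge m_1(\Gamma_\text{CD})$ it equals $0$. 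The supremum of the miss exponent over feasible thresholds is therefore $\lim_{\tau\downarrow m_0(\Gamma_\text{CD})}\kappa_m(\Gamma_\text{CD})$, which equals $\kappa_m(\Gamma_\text{CD})$ evaluated at $\tau=m_0(\Gamma_\text{CD})$ by continuity of the map $\tau\mapsto t^*$ (the implicit function theorem applied to (\ref{eq:t}), whose right-hand side is strictly increasing in $t^*$ on $\{t^*: 1-t^*\Gamma_\text{CD}(\nu)>0\ \forall\nu\}$).

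It remains to evaluate $\kappa_m(\Gamma_\text{CD})$ at $\tau=m_0(\Gamma_\text{CD})$. Substituting this value of $\tau$ into (\ref{eq:t}) makes the left-hand side collapse to $\int_0^1\Gamma_\text{CD}(\nu)/(1+\Gamma_\text{CD}(\nu))\,d\nu$, so the unique root is $t^*=-1$. Plugging $t^*=-1$ into $\kappa_m(\Gamma_\text{CD})$ annihilates the integrand $\log(1-t^*\Gamma_\text{CD})+t^*\log(1+\Gamma_\text{CD})$ and leaves $\tau t^*=-m_0(\Gamma_\text{CD})$; using $-\Gamma_\text{CD}/(1+\Gamma_\text{CD})=1/(1+\Gamma_\text{CD})-1$ rewrites $-m_0(\Gamma_\text{CD})$ precisely as the right-hand side of (\ref{CDmalpha}). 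The main obstacle is the first step: making rigorous the claim that asymptotic feasibility of $P_{fa}^n\le\alpha$ forces $\tau\ge m_0(\Gamma_\text{CD})$ and that the optimal exponent is attained as the one-sided limit at this boundary. Once the feasible range and the monotonicity and continuity of $\kappa_m(\Gamma_\text{CD})$ in $\tau$ are in hand, the remainder is the algebraic identity $t^*=-1$.
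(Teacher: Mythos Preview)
Your proposal is correct and follows essentially the same route as the paper: set $\tau=m_0(\Gamma_\text{CD})+\epsilon$ with $\epsilon>0$ arbitrarily small, and evaluate the miss exponent from Theorem~\ref{CDexp} at this threshold. The paper's own proof is a one-line pointer to this evaluation together with a reference for the Stein-type feasibility argument, whereas you spell out the feasibility of the level constraint, the monotonicity of $\kappa_m$ in $\tau$, and the identification $t^*=-1$ at $\tau=m_0(\Gamma_\text{CD})$; these details are exactly what the cited reference would supply, and your algebra leading to (\ref{CDmalpha}) via $-\Gamma_\text{CD}/(1+\Gamma_\text{CD})=1/(1+\Gamma_\text{CD})-1$ is correct.
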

\begin{proof}
Evaluate Th. \ref{CDexp} with $\tau=m_0(\Gamma_\text{CD})+\epsilon$, where $\epsilon>0$ is arbitrary small. See \cite[Prop. 2]{PoorFreqDet} for a detailed proof. 
\end{proof}

\subsection{Decentralized Error Exponents}


Consider the setup in (\ref{eq:zvec}), and the LLR test in (\ref{TD1}). Bearing in mind the asymptotic behavior of the covariance matrices of $\ve{z}$ under $\Hip_0$ and $\Hip_1$, we formulate the following lemma:
\begin{lemma}[Decentralized Hypothesis Testing]
\label{lem:HT-DD}
Under either PCS-MAC or PFS-MAC, the decision at the FC asymptotically consists on choosing one of the following PSDs, with  $\{\gamma_{k'}^n\}$ as in (\ref{eq:gamma}), and $\nu\in\Theta_\beta$:  
\begin{equation}
\left\{
\begin{array}{llll}
\Hip_1: & h(\nu)=(\phi(\nu)+\sigma_v^2)\xi(\nu) + \sigma^2_w\\
\Hip_0: & h(\nu)= \sigma_v^2\xi(\nu) + \sigma^2_w.
\end{array}
\right.
\label{eq:testFC}
\end{equation}
\end{lemma}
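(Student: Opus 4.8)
The plan is to reduce the decentralized LLR test (\ref{TD1}) to a test between two \emph{scalar} spectral densities by showing that, under both strategies, the covariance matrices $\Xi_{0,n'}$ and $\Xi_{1,n'}$ are diagonal — exactly for PCS-MAC and asymptotically for PFS-MAC — with diagonal entries that are samples over $\Theta_\beta$ of $h_0(\nu)=\sigma_v^2\xi(\nu)+\sigma_w^2$ and $h_1(\nu)=(\phi(\nu)+\sigma_v^2)\xi(\nu)+\sigma_w^2$. Once this reduction is in place, the asymptotic decision at the FC is precisely the hypothesis test (\ref{eq:testFC}) over $\Theta_\beta$.

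For PCS-MAC, I would substitute $C_{nn'}=U_{nn'}\Delta_{n'}$ into the definitions of $\Xi_{0,n'}$ and $\Xi_{1,n'}$. Since the columns of $U_{nn'}$ are orthonormal eigenvectors of $\Sigma_n$, one has $C_{nn'}^HC_{nn'}=\Delta_{n'}^2$ and $C_{nn'}^H\Sigma_nC_{nn'}=\Delta_{n'}\,\diag(\lambda_1^n,\dots,\lambda_{n'}^n)\,\Delta_{n'}$, so both covariance matrices are already diagonal, with $k'$-th entries $(\lambda_{k'}^n+\sigma_v^2)(\gamma_{k'}^n)^2+\sigma_w^2$ and $\sigma_v^2(\gamma_{k'}^n)^2+\sigma_w^2$. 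The parametrization (\ref{eq:gamma}) identifies $(\gamma_{k'}^n)^2$ with samples of $\xi$, while the choice of $n'$ in Definition \ref{PCS-MAC}, together with the Toeplitz distribution theorem (Theorem \ref{ToepTheo}) — which matches the ordered eigenvalues $\{\lambda_k^n\}$ with the ordered values of $\phi(\nu)$ — shows that the retained indices $k'\le n'$ correspond asymptotically to the frequency set $\Theta_\beta$. Hence the diagonal entries are samples over $\Theta_\beta$ of $h_1$ and $h_0$, i.e.\ (\ref{eq:testFC}).

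For PFS-MAC, I would substitute $C_{nn'}=F_{nn'}\Delta_{n'}$. Again $C_{nn'}^HC_{nn'}=\Delta_{n'}^2$ because $F_{nn'}$ has orthonormal columns, but $F_{nn'}^H\Sigma_nF_{nn'}$ is no longer diagonal, as the DFT vectors are not eigenvectors of the Toeplitz matrix. Here I invoke Lemma \ref{lem:equiv}: the DFT vectors \emph{are} eigenvectors of the circulant matrix $B_n(\phi)$, with eigenvalues $\phi((j_k-1)/n)$ (Definition \ref{def:circ}), and $\Sigma_n\sim B_n(\phi)$. Because compression by a matrix with orthonormal columns amplifies the weak norm by at most the bounded factor $\sqrt{n/n'}\to 1/\sqrt{\beta}$, the matrix $F_{nn'}^H\Sigma_nF_{nn'}$ is asymptotically equivalent to the diagonal matrix $\diag\big(\phi((j_k-1)/n)\big)_{k=1}^{n'}$; adding the common terms $\sigma_v^2\Delta_{n'}^2+\sigma_w^2I_{n'}$, each $\Xi_{i,n'}$ is asymptotically equivalent to a diagonal matrix whose entries are samples of $h_i$. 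Since asymptotic equivalence in weak norm is preserved under inversion and by the normalized log-determinant and quadratic-form operations appearing in (\ref{TD1}), the decentralized statistic behaves asymptotically as the LLR of a process with the scalar densities $h_0,h_1$; and, exactly as for PCS-MAC, the choice of $n'$ in Definition \ref{PFS-MAC} makes the retained frequencies $\{(j_k-1)/n\}_{k\le n'}$ fill $\Theta_\beta$.

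The main obstacle is the PFS-MAC step: making rigorous that asymptotic equivalence survives the compression $A_n\mapsto F_{nn'}^HA_nF_{nn'}$ (so the weak-norm error is controlled uniformly in $n$), and that this weak-norm closeness suffices to replace $\Xi_{i,n'}$ by its diagonal surrogate for the purpose of the LDP later obtained from Theorem \ref{GETheo} — whose hypotheses $A_1$ and $A_2$ have to be verified for the limiting densities in (\ref{eq:testFC}). One must also handle carefully the re-indexing induced by the sorting permutation $(j_1,\dots,j_n)$ and the identification of the sampled energy profile $\{\gamma_{k'}^n\}$ with a function supported on $\Theta_\beta$, so that the limiting integrals are over $\Theta_\beta$ and not merely over an interval $[0,\beta)$.
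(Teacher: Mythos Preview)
Your proposal is correct and follows essentially the same line as the paper: diagonalize $\Xi_{i,n'}$ exactly under PCS-MAC, and for PFS-MAC invoke Lemma~\ref{lem:equiv} plus preservation of asymptotic equivalence under multiplication to replace $F_{nn'}^H\Sigma_nF_{nn'}$ by a diagonal matrix, thereby reducing both cases to the same spectral test. The only notable difference is presentational: for PCS-MAC the paper zero-pads $\ve{z}$ to length $n$ and applies the full DFT $F_n$ to exhibit the covariance as an $n\times n$ circulant generated by $\tilde h_i(\nu)=h_i(\nu)\bigone(\nu\in\Theta_\beta)$, whereas you work directly with the $n'\times n'$ diagonal and the identification of its entries with samples of $h_i$ over $\Theta_\beta$; your careful remark about the $\sqrt{n/n'}$ inflation of the weak norm under compression is in fact more explicit than the paper's appeal to \cite[Th.~2.1(3)]{GrayToeplitz}.
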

\begin{proof}
See App. \ref{App:HT-DD}
\end{proof}
The following theorem establishes the error exponents for the decentralized strategies presented in Section \ref{sec:Stats}. 
Both strategies PCS-MAC and PFS-MAC allow to improve the performance of detection by careful selection of $\xi(\nu)$.
\begin{theorem}[DD Error Exponents]
\label{DDexp}
Consider the hypothesis testing problem in (\ref{eq:testFC}), 
where $\xi(\nu)$ is a fixed asymptotic energy profile that satisfies (\ref{xiConstr}). Consider also that the asymptotic fraction of DoF is limited to $\beta$. Assume that $\phi(\nu)$ is in the Wiener class and that the threshold of the test is fixed to $\tau$. Let $\Gamma_\text{DD}\defeq \Gamma_\text{DD}(\nu)= \frac{\xi(\nu)\phi(\nu)}{\xi(\nu)\sigma_v^2 + \sigma^2_w}$. Then, the strategies PCS-MAC and PFS-MAC have the same false alarm and miss error exponents,
and they are:
\begin{equation}
\label{DDfa}
\kappa_{fa}^\text{DD}= 
\left\{
\begin{array}{cl}
\kappa_{fa}(\Gamma_\text{DD}) & \mbox{if } \tau>m_0(\Gamma_\text{DD}),\\
0 &  \mbox{if } \tau\leq m_0(\Gamma_{\text{DD}}),
\end{array}\right.
\end{equation}
\begin{equation}
\label{DDm}
\kappa_{m}^\text{DD}= 
\left\{
\begin{array}{cl}
\kappa_{m}(\Gamma_\text{DD}), & \mbox{if } \tau<m_1(\Gamma_\text{DD}),\\
0 &  \mbox{if } \tau\geq m_1(\Gamma_{\text{DD}}),
\end{array}
\right.
\end{equation}
where $t^*$ is the unique solution to (\ref{eq:t}) and $m_0(\cdot)$ and $m_1(\cdot)$ are given by (\ref{eq:m0}) and (\ref{eq:m1}) with $\Gamma=\Gamma_\text{DD}$, $\Theta=\Theta_\beta$.
\end{theorem}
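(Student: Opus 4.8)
The plan is to push the asymptotic reduction of Lemma~\ref{lem:HT-DD} through the modified G\"artner--Ellis theorem (Theorem~\ref{GETheo}) and then perform the same Fenchel--Legendre bookkeeping as in the proof of Theorem~\ref{CDexp}. By Lemma~\ref{lem:HT-DD}, under either PCS-MAC or PFS-MAC the large-$n$ behaviour of $T_\text{d}^n(\ve z)$ is that of the normalized LLR of a stationary complex Gaussian process with spectral densities $h_1(\nu)=(\phi(\nu)+\sigma_v^2)\xi(\nu)+\sigma_w^2$ and $h_0(\nu)=\sigma_v^2\xi(\nu)+\sigma_w^2$ on $\Theta_\beta$; extending $\xi$ by zero off $\Theta_\beta$ makes $h_0=h_1=\sigma_w^2$ there, so those frequencies are inert and every integral over $[0,1]$ collapses to one over $\Theta_\beta$. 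In particular both strategies yield the \emph{same} pair $(h_0,h_1)$, hence the same asymptotic LMGF and the same exponents, which already settles the equality asserted in the statement; it remains only to evaluate the common exponents.

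First I would verify the hypotheses of Theorem~\ref{GETheo}. Since $\sigma_w^2>0$, both $h_i$ are bounded below by $\sigma_w^2$; since $\phi$ is Wiener class (hence bounded) and $\xi$ is Riemann integrable (hence bounded), both $h_i$ are bounded above. Therefore $\log h_i\in L^\infty([0,1])\subset L^1([0,1])$ (assumption $A_1$) and $h_i/h_{\setminus i}\in L^\infty([0,1])$ (assumption $A_2$), so $\{T_\text{d}^n\}$ satisfies the LDP under $\Hip_i$ with good rate function $\Lambda_i^*(x)=\sup_t\{xt-\Lambda_i(t)\}$. Using the identity $h_1(\nu)/h_0(\nu)=1+\Gamma_\text{DD}(\nu)$, whence $(h_0-h_1)/h_0=-\Gamma_\text{DD}$ and $(h_0-h_1)/h_1=-\Gamma_\text{DD}/(1+\Gamma_\text{DD})$, the formula in Theorem~\ref{GETheo} becomes $\Lambda_1(t)=-\int_{\Theta_\beta}\{\log(1-t\Gamma_\text{DD}(\nu))+t\log(1+\Gamma_\text{DD}(\nu))\}d\nu$, and the corresponding expression for $\Lambda_0$ simplifies to $\Lambda_0(t)=\Lambda_1(t-1)$ (the classical LLR shift); consequently $\Lambda_0^*(x)=\Lambda_1^*(x)+x$.

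Next I would identify the exponents. With $P_{fa}^n=\prob_0(T_\text{d}^n>\tau)$ and $P_m^n=\prob_1(T_\text{d}^n<\tau)$, the LDP applied to the half-lines $(\tau,\infty)$ and $(-\infty,\tau)$ (which are $\Lambda$-continuous because each $\Lambda_i^*$ is convex and continuous on the interior of its domain) gives $\kappa_{fa}^\text{DD}=\inf_{x>\tau}\Lambda_0^*(x)$ and $\kappa_{m}^\text{DD}=\inf_{x<\tau}\Lambda_1^*(x)$. The convex function $\Lambda_0^*$ vanishes at its minimizer $x=\Lambda_0'(0)=\Lambda_1'(-1)=m_0(\Gamma_\text{DD})$ and $\Lambda_1^*$ at $x=\Lambda_1'(0)=m_1(\Gamma_\text{DD})$ — these being the limiting means of $T_\text{d}^n$ under $\Hip_0$ and $\Hip_1$ — so $\kappa_{fa}^\text{DD}=\Lambda_0^*(\tau)$ if $\tau>m_0(\Gamma_\text{DD})$ and $0$ otherwise, and $\kappa_m^\text{DD}=\Lambda_1^*(\tau)$ if $\tau<m_1(\Gamma_\text{DD})$ and $0$ otherwise. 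Finally, the maximizer $t^*$ in $\Lambda_1^*(\tau)$ solves $\Lambda_1'(t^*)=\tau$, which rearranges to exactly (\ref{eq:t}); uniqueness follows from strict convexity of $\Lambda_1$, and in the relevant range $m_0(\Gamma_\text{DD})<\tau<m_1(\Gamma_\text{DD})$ one has $t^*\in(-1,0)$, safely interior to the effective domain. Substituting back gives $\Lambda_1^*(\tau)=\tau t^*+\int_{\Theta_\beta}\{\log(1-t^*\Gamma_\text{DD})+t^*\log(1+\Gamma_\text{DD})\}d\nu=\kappa_m(\Gamma_\text{DD})$ and $\Lambda_0^*(\tau)=\Lambda_1^*(\tau)+\tau=\kappa_{fa}(\Gamma_\text{DD})$, which together with the threshold dichotomies above yields (\ref{DDfa})--(\ref{DDm}) with $\Theta=\Theta_\beta$, identically for PCS-MAC and PFS-MAC.

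The \textbf{main obstacle} is not the Fenchel--Legendre bookkeeping (it mirrors App.~\ref{App:ProofCD}) but the legitimacy of the reduction feeding Theorem~\ref{GETheo}: the genuine statistic is the LLR of $\ve z\in\C^{n'}$ with the non-Toeplitz covariances $\Xi_{i,n'}$, normalized by $n$ rather than $n'$, and the limiting spectra are supported on the generally non-interval set $\Theta_\beta$. Showing that $\Xi_{i,n'}$ may be replaced, for the purpose of the LLR, by diagonal matrices whose entries sample $h_i$ over $\Theta_\beta$ — and that the ``bad'' eigenvalues of the underlying Toeplitz matrices do not destroy the steepness needed for the Fenchel--Legendre transform to be the correct rate function — is precisely what Lemma~\ref{lem:HT-DD} (via Lemma~\ref{lem:equiv} and Theorem~\ref{ToepTheo}) together with Theorem~\ref{GETheo} are built to supply; once both are invoked the argument reduces to the computation above.
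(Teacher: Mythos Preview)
Your proposal is correct and follows essentially the same route as the paper's proof in Appendix~\ref{App:ProofDD}: invoke Lemma~\ref{lem:HT-DD} to reduce both strategies to the common pair of spectra $(h_0,h_1)$, verify hypotheses $A_1$ and $A_2$ of Theorem~\ref{GETheo}, and then repeat the Fenchel--Legendre bookkeeping of Appendix~\ref{App:ProofCD} with $\Gamma_\text{DD}$ replacing $\Gamma_\text{CD}$ and $\Theta_\beta$ replacing $[0,1]$. Your verification of $A_1$ via boundedness of $\xi$ (Riemann integrable on a compact interval, hence bounded) is slightly cleaner than the paper's, which instead bounds $\int_{\Theta_\beta}|\log h_i|$ through the energy constraint~(\ref{xiConstr}) and the inequality $|\log x|<x$ for $x\ge 1$; both arguments are valid and the remainder of your proof matches the paper essentially line for line.
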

\begin{proof}
See App. \ref{App:ProofDD}.
\end{proof}

\begin{corollary}[DD Miss Error Exponent and Optimum Energy Profile]
\label{cor:DDMiss}
Consider the hypothesis testing problem in (\ref{eq:testFC}). Assume that $\phi(\nu)$ is in the Wiener class, the fraction of DoF is limited to $\beta$, and the energy constraint is given by (\ref{xiConstr}). The optimal miss error exponent subject to $P_{fa}^n\leq \alpha$, with $\alpha\in(0,1)$ fixed, considering orthogonal schemes is achieved with either the PCS-MAC or PFS-MAC scheme and is given by
\begin{equation}
\label{eq:DDMiss}
\kappa_{m,\alpha}^\text{DD}\!=\!\!\int_{\Theta_{\beta*}} \!\!\left\{\frac{1}{1+\Gamma_\text{DD}(\nu)}\! + \log\left(1+\Gamma_\text{DD}(\nu)\right)-1 \right\}d\nu,
\end{equation}
where $\Theta_{\beta*}\subseteq \Theta_{\beta}$
is the support of the optimal energy profile $\xi_\text{OEP}(\nu)=\max\{\hat{\xi}_1(\nu),\hat{\xi}_2(\nu),\hat{\xi}_3(\nu),0\}$ and $\hat{\xi}_i(\nu)$, $i=1,2,3,$ are the roots of the following cubic equation:
\begin{equation}
a_3(\nu) \hat{\xi}^3+a_2(\nu)\hat{\xi}^2 + a_1(\nu) \hat{\xi}+a_0=0
\label{cubic}
\end{equation}
with coefficients
\begin{align}
\label{eq:cubicCoeff}
a_0\phantom{(\nu)}&=\lambda^*\sigma^6_w\nonumber\\
a_1(\nu)&=\sigma^2_w(-\phi(\nu)^2+\lambda^*\sigma^2_w(2\phi(\nu)+3\sigma^2_v))\\
a_2(\nu)&=\lambda^*\sigma^2_w (\phi(\nu)^2 + 4\phi(\nu)\sigma^2_v +3\sigma^4_v)\nonumber\\
a_3(\nu)&=\lambda^*\sigma^2_v(\phi(\nu)+ \sigma^2_v)^2,\nonumber
\end{align}
where $\lambda^*$ is the Lagrange multiplier that satisfies the energy constraint (\ref{xiConstr}) with equality. The closed-form solution of $\xi_\text{OEP}(\nu)$ is shown in (\ref{xiSol}), at the top of the next page.
\end{corollary}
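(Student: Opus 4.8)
The plan is to start from Theorem \ref{DDexp}, which already gives the miss error exponent $\kappa_m^\text{DD}$ for a \emph{fixed} energy profile $\xi(\nu)$ and threshold $\tau$, and then optimize over $\xi$ subject to the energy constraint (\ref{xiConstr}). First I would follow the same reduction used in the Corollary to Theorem \ref{CDexp}: to meet the false-alarm constraint $P_{fa}^n\le\alpha$ with the best possible miss exponent, one chooses the threshold $\tau$ just above $m_0(\Gamma_\text{DD})$, i.e. $\tau=m_0(\Gamma_\text{DD})+\epsilon$ with $\epsilon\downarrow0$. Substituting $\tau=m_0(\Gamma_\text{DD})$ into the definition of $t^*$ in (\ref{eq:t}) forces $t^*=0$, and then $\kappa_m(\Gamma_\text{DD})$ collapses to $\int_{\Theta_\beta}\{\tfrac{1}{1+\Gamma_\text{DD}(\nu)}+\log(1+\Gamma_\text{DD}(\nu))-1\}\,d\nu$ (this is exactly the computation cited from \cite[Prop.~2]{PoorFreqDet}, now with $\Theta=\Theta_\beta$ and $\Gamma_\text{DD}$ in place of $\Gamma_\text{CD}$). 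This yields the integrand appearing in (\ref{eq:DDMiss}); what remains is to choose $\xi$ to maximize this functional.

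Next I would set up the constrained optimization $\max_\xi \int_{\Theta_\beta} G(\Gamma_\text{DD}(\nu))\,d\nu$ subject to $\int_{\Theta_\beta}\xi(\nu)\,d\nu\le E_t/(\sigma_s^2+\sigma_v^2)$ and $\xi(\nu)\ge0$, where $G(x)=\tfrac{1}{1+x}+\log(1+x)-1$ is increasing in $x\ge0$ and $\Gamma_\text{DD}(\nu)=\tfrac{\xi(\nu)\phi(\nu)}{\xi(\nu)\sigma_v^2+\sigma_w^2}$. Because the objective is an integral of a pointwise function of $\xi(\nu)$ and the constraint is a single linear integral, this is a classic per-frequency (waterfilling-type) problem solvable by Lagrangian methods: form $\mathcal{L}=\int_{\Theta_\beta}\{G(\Gamma_\text{DD}(\nu))-\lambda^*\xi(\nu)\}\,d\nu$ and differentiate the integrand with respect to $\xi(\nu)$ for each fixed $\nu$. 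Using $G'(x)=\tfrac{x}{(1+x)^2}$ and the chain rule with $\tfrac{d\Gamma_\text{DD}}{d\xi}=\tfrac{\phi\,\sigma_w^2}{(\xi\sigma_v^2+\sigma_w^2)^2}$, the stationarity condition $G'(\Gamma_\text{DD})\,\tfrac{d\Gamma_\text{DD}}{d\xi}=\lambda^*$ becomes, after clearing denominators, a polynomial identity in $\xi$; I expect it to be cubic, giving precisely (\ref{cubic}) with the coefficients (\ref{eq:cubicCoeff}). The interior optimum at frequency $\nu$ is the relevant real nonnegative root, and the boundary case $\xi(\nu)=0$ (which is optimal where even $\xi\to0$ already gives marginal gain below $\lambda^*$) accounts for the $\max\{\cdot,0\}$ and for the fact that the effective support $\Theta_{\beta*}$ may be a proper subset of $\Theta_\beta$; this gives $\xi_\text{OEP}(\nu)=\max\{\hat\xi_1,\hat\xi_2,\hat\xi_3,0\}$. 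The multiplier $\lambda^*$ is then pinned down by making (\ref{xiConstr}) hold with equality, and one checks that this is indeed a maximum (e.g. concavity of $G(\Gamma_\text{DD}(\cdot))$ as a function of $\xi$, or a second-order/monotonicity argument), and that among all orthogonal schemes PCS-MAC and PFS-MAC both attain it since Theorem \ref{DDexp} already showed they share the same exponents and Lemma \ref{lem:HT-DD} shows any orthogonal scheme reduces to the model (\ref{eq:testFC}).

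The main obstacle I anticipate is algebraic rather than conceptual: carefully carrying out the differentiation of $G(\Gamma_\text{DD}(\nu))$ through the rational substitution and clearing the nested denominators $(\xi\sigma_v^2+\sigma_w^2)$ to arrive cleanly at the cubic (\ref{cubic}) with exactly the stated coefficients, and then correctly identifying which of the three roots is real, nonnegative, and actually maximizing (as opposed to a spurious root introduced by clearing denominators). One must also be careful about the restriction to $\Theta_\beta$: the bandwidth limitation $\beta$ enters only through the domain of integration, so a minor point is to argue that it is never advantageous to ``spread'' energy outside the top-$\beta$ modes — this follows because $\phi(\nu)$ is smaller there and $G$ composed with $\Gamma_\text{DD}$ is increasing in $\phi$, so the support $\Theta_{\beta*}$ of the optimal profile is automatically contained in $\Theta_\beta$, consistent with Definition \ref{def:freqSet}. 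The closed-form root formula quoted as (\ref{xiSol}) is then just Cardano's formula applied to (\ref{cubic}), which I would state without re-deriving.
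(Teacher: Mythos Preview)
Your proposal follows essentially the same route as the paper: reduce via $\tau=m_0(\Gamma_\text{DD})+\epsilon$ (citing \cite[Prop.~2]{PoorFreqDet}), then optimize the resulting integral functional over $\xi$ with a Lagrangian/KKT argument, obtain the cubic from the stationarity condition, and fix $\lambda^*$ by saturating (\ref{xiConstr}). Two points, however, need correction.

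First, your justification that PCS-MAC/PFS-MAC are optimal \emph{among all orthogonal schemes} is not supported by Lemma~\ref{lem:HT-DD}: that lemma only shows that PCS-MAC and PFS-MAC both lead to the asymptotic model (\ref{eq:testFC}); it says nothing about an arbitrary orthogonal precoder $C_{nn'}=V_{nn'}\Delta_{n'}$. The paper does not derive this step internally either --- it invokes \cite[Th.~2]{Bianchi_2011} to conclude that, for a fixed energy profile, the eigen/Fourier directions are optimal among orthogonal precoders, and only then optimizes over $\xi$. You should cite that external result (or supply a separate argument) rather than appeal to Lemma~\ref{lem:HT-DD}.

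Second, your proposed sufficiency check via ``concavity of $G(\Gamma_\text{DD}(\cdot))$ as a function of $\xi$'' will not go through: $G''(x)=(1-x)/(1+x)^3$ changes sign at $x=1$, so the composition is not concave in $\xi$ in general. The paper handles this differently: it observes that the integrand $I(\xi(\nu))$ is \emph{monotone increasing} in $\xi$ (so the functional is quasilinear and KKT is only necessary), then uses Descartes' rule of signs on (\ref{cubic}) to show there are either zero or two positive roots, and selects the larger one by monotonicity; frequencies with no positive root get $\xi^*(\nu)=0$. Your ``second-order/monotonicity argument'' alternative is the right instinct --- just drop the concavity claim and use the monotonicity plus Descartes' rule explicitly.
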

\begin{proof}
See App. \ref{App:ProofMissExp} 
\end{proof}

\begin{figure*}[hbt]
\begin{align}
\label{xiSol}
\xi_\text{OEP}(\nu) &= \{(1/(62^{1/3} \lambda(\phi^2 v + 2\phi v^2 +  v^3)))((1 - \jmath \sqrt{3})
      (2 \lambda^3\phi^6 w^3 + 6 \lambda^3\phi^5 v w^3 + 6 \lambda^3\phi^4 v^2 w^3 + 2 \lambda^3\phi^3 v^3 w^3 +9 \lambda^2\phi^6 v w^2 \nonumber\\
      &+ 54 \lambda^2\phi^5 v^2 w^2 + 108 \lambda^2\phi^4 v^3 w^2 + 90 \lambda^2\phi^3 v^4 w^2 + 27 \lambda^2\phi^2 v^5 w^2 
        + (4(3 \lambda w(\phi^2 v + 2\phi v^2 +  v^3)(2 \lambda\phi w + 3 \lambda v w - \phi^2) \nonumber\\
      & -\lambda^2 w^2(\phi^2 + 4\phi v + 3 v^2)^2)^3 + (2 \lambda^3\phi^6 w^3 + 6 \lambda^3\phi^5 v w^3 + 6 \lambda^3\phi^4 v^2 w^3 + 2 \lambda^3\phi^3 v^3 w^3 + 9 \lambda^2\phi^6 v w^2 + 54 \lambda^2\phi^5 v^2 w^2\nonumber\\ 
      & + 108 \lambda^2\phi^4 v^3 w^2 + 90 \lambda^2\phi^3 v^4 w^2 + 27 \lambda^2\phi^2 v^5 w^2)^2)^{2/3}) - ((1 + \jmath \sqrt{3})(3 \lambda w(\phi^2 v + 2\phi v^2 +  v^3)(2 \lambda\phi w + 3 \lambda v w - \phi^2)\nonumber \\ 
      & -\lambda^2 w^2(\phi^2 + 4\phi v + 3 v^2)^2))/(32^(2/3) \lambda(\phi^2 v + 2\phi v^2 +  v^3)
      (2 \lambda^3\phi^6 w^3 + 6 \lambda^3\phi^5 v w^3 + 6 \lambda^3\phi^4 v^2 w^3 + 2 \lambda^3\phi^3 v^3 w^3\\
      & + 9 \lambda^2\phi^6 v w^2 + 54 \lambda^2\phi^5 v^2 w^2 + 108 \lambda^2\phi^4 v^3 w^2 + 90 \lambda^2\phi^3 v^4 w^2 + 27 \lambda^2\phi^2 v^5 w^2 + (4(3 \lambda w(\phi^2 v + 2\phi v^2 +  v^3)\nonumber\\ 
      & \times(2 \lambda\phi w + 3 \lambda v w - \phi^2) - \lambda^2 w^2(\phi^2 + 4\phi v + 3 v^2)^2)^3 + (2 \lambda^3\phi^6 w^3 + 6 \lambda^3\phi^5 v w^3 + 6 \lambda^3\phi^4 v^2 w^3 + 2 \lambda^3\phi^3 v^3 w^3\nonumber \\ 
      & + 9 \lambda^2\phi^6 v w^2 + 54 \lambda^2\phi^5 v^2 w^2 + 108 \lambda^2\phi^4 v^3 w^2 + 90 \lambda^2\phi^3 v^4 w^2 + 27 \lambda^2\phi^2 v^5 w^2)^2)^{2/3})- 
      ( w(\phi^2 + 4\phi v + 3 v^2))\nonumber\\
      &\;\,/(3(\phi^2 v + 2\phi v^2 +  v^3))\}\bigone(D(\nu)\geq 0). \nonumber\\ \nonumber\\   
D(\nu)&= -( \lambda \phi^2+2  \lambda \phi  v+ \lambda  v^2) (4  \lambda^2 \phi^5  w^5- \lambda \phi^6  w^4+18  \lambda \phi^5  v  w^4+27  \lambda \phi^4  v^2  w^4-4 \phi^6  v  w^3).
\label{discrim}
\end{align}
\caption{Optimal energy profile and the discriminant of (\ref{cubic}), $D(\nu)$. For readability, we define $\phi\defeq \phi(\nu)$, $w\defeq\sigma^2_w$, $v\defeq\sigma^2_v$ and $\lambda\defeq\lambda^*$. }
\end{figure*}

\subsection{Suboptimal Energy Profiles}
\label{sec:Profiles}
In this section we define three energy profiles: 
i) constant energy profile, $\xi_\text{CEP}(\nu) \defeq \frac{E_t}{(\sigma^2_v+\sigma^2_s)}$, where all the sensors transmit using the same gain for all channel uses;
ii) spectral energy profile, $\xi_\text{SEP}(\nu) \defeq \frac{E_t \phi(\nu)} {\sigma^2_s(\sigma^2_v+\sigma^2_s)}$, that reproduces the shape of $\phi(\nu)$ and therefore, allocates more energy to the frequencies where the process under $\Hip_1$ concentrates more power; iii) ON/OFF energy profile, $\xi_\text{ON/OFF-EP}(\nu) \defeq \frac{E_t}{(\sigma^2_v+\sigma^2_s)\beta^*} \bigone(\nu\in\Theta_{\beta*})$, with $\beta^*=\text{L}(\Theta_{\beta*})\leq \beta$,
where each sensor transmits with constant gain if $\nu\in\Theta_{\beta*}$ and stays silent otherwise. 
In i) and ii) we assume that the sensors do not have access to the transmitted modes set and no DoF compression is possible ($\beta=1$). In iii) and in the optimal energy profile, the sensors know the transmitted modes set. 

\section{Numerical Results for 1D}
\label{sec:results1D}

In this section, we evaluate the miss error exponent for two complex Gaussian correlated auto-regressive moving average (ARMA) processes with PSD given by 
$\phi(\nu)= \sigma^2_\text{in} \left|\frac{\sum_{k=0}^{M} b_k\, e^{-\jmath 2\pi \nu k}}{\sum_{k=0}^{N} a_k\, e^{-\jmath 2\pi \nu k}}\right|^2$, 
where $M$ and $N$ are the degrees of the numerator and denominator polynomials, respectively, and $\sigma^2_\text{in}$ is selected such that the variance of the process is $\sigma^2_s$. The simulated processes (with even PSD) are plotted in Fig. \ref{fig:PSDs} with coefficients shown in Table \ref{coef}. We will referred to them as PSD1 and PSD2.
\begin{figure}[bht]
\centering
\includegraphics[width=1\linewidth]{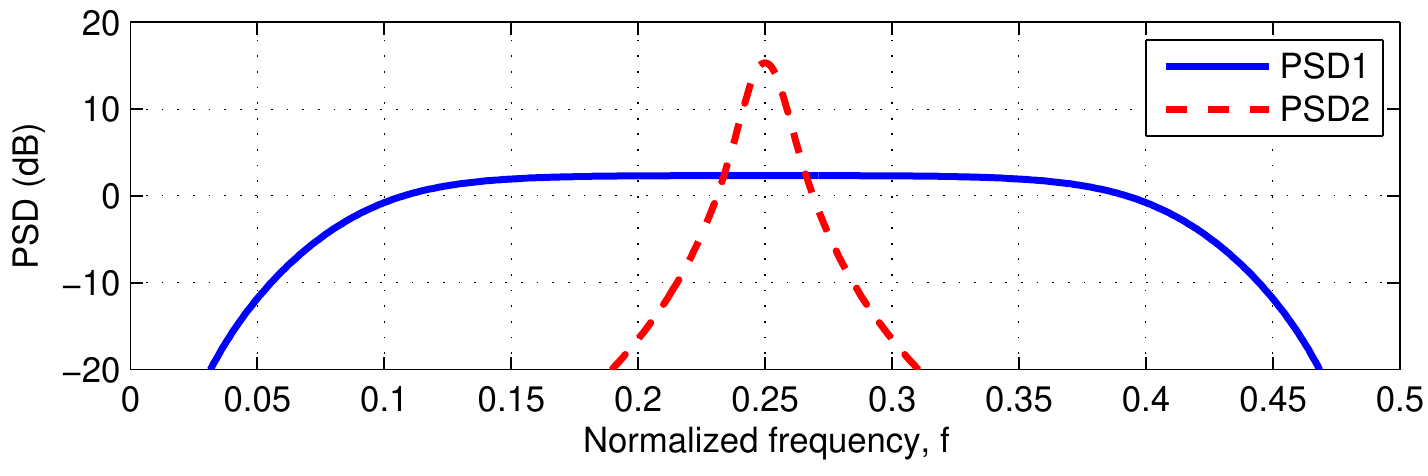}
\caption{PSDs (in dB) with variance $\sigma_s^2=1$. Only positive normalized frequencies are plotted.}
\label{fig:PSDs}
\end{figure}
\begin{table}[htb]
\begin{center}
\begin{tabular}{| c | c | c | c | c |}
\hline
\multicolumn{5}{|c|}{Coefficients PSD1:\phantom{\large k'}$\sigma^2_\text{in}=1.70$} \\
\hline\hline
$b_0$ & $b_1$ & $b_2$ & $b_3$ & $b_4$ \\
\hline
 $.39$ & $0$ & $-.78$& $0$& $.39$\\
\hline\hline
 $a_0$ & $a_1$ & $a_2$ & $a_3$ & $a_4$\\
\hline
 $1$ & $0$ & $-.37$ &  $0$ & $.19$\\
\hline
\end{tabular}
\begin{tabular}{| c | c | c | c | c |}
\hline
\multicolumn{5}{|c|}{Coefficients PSD2:\phantom{\large k'}$\sigma^2_\text{in}=2.37\cdot 10^{-5}$} \\
\hline\hline
$b_0$ & $b_1$ & $b_2$ & $b_3$ & $b_4$ \\
\hline
$3$ & $0$ & $-6$& $0$& $3$\\
\hline\hline
 $a_0$ & $a_1$ & $a_2$ & $a_3$ & $a_4$\\
\hline
$1$ &  $0$ &   $1.82$ &  $0$ &    $0.83$\\
\hline     
\end{tabular}
\end{center}
\caption{ARMA coefficients to generate both PSDs with variance $\sigma^2_s=1$ and $N=M=4$.}
\label{coef}
\end{table}
Let  $\text{SNR}_\text{M}=\sigma^2_s/\sigma^2_v$ and $\text{SNR}_\text{C}=E_t/\sigma^2_w$ 
be the measurement and communication signal-to-noise ratios, respectively. 
We use $\text{SNR}_\text{M}= 5$ dB, $\sigma^2_s=1$ and $E_t=1$ on all figures of this section.
In Fig. \ref{fig:ExpVsEta}, we show the behavior of the miss error exponent for processes PSD1 and PSD2 against $\text{SNR}_\text{C}$ when $P_{fa}^n\leq\alpha$.
In both cases the decentralized detectors (DD) approach the centralized detector (CD) performance when $\text{SNR}_\text{C}$ is high enough. We see that the optimum scheme DD-OEP allows to save a significant amount of energy when the communication signal-to-noise ratio is low. This is the desirable operation regime of a wireless sensor network with massive amount of nodes since it would extend the useful life of the WSN, avoid maintenance action for changing the battery of the nodes, or even elude network reconfiguration when nodes run out of energy. 

Table \ref{tab:energyGap} shows the energy savings of the optimum scheme DD-OEP and the suboptimal energy profile DD-SEP with respect to (wrt) the constant energy profile DD-CEP for several miss error exponents, and for both PSDs.
\begin{figure}[t]
\centering
\subfigure[PSD1, $\beta=0.6$ for OEP and ON/OFF. \label{fig:ExpVsEta1}]{\includegraphics[width=1\linewidth]{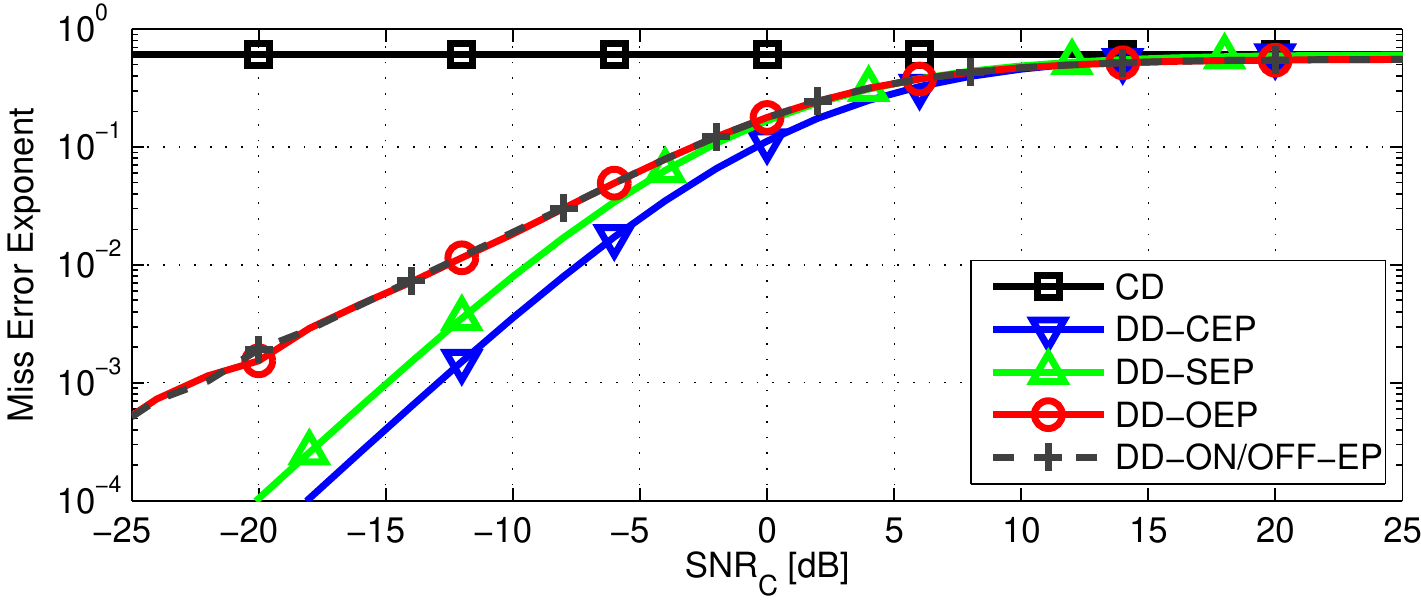}}
\subfigure[PSD2, $\beta=0.2$ for OEP and ON/OFF. \label{fig:ExpVsEta2}]{\includegraphics[width=1\linewidth]{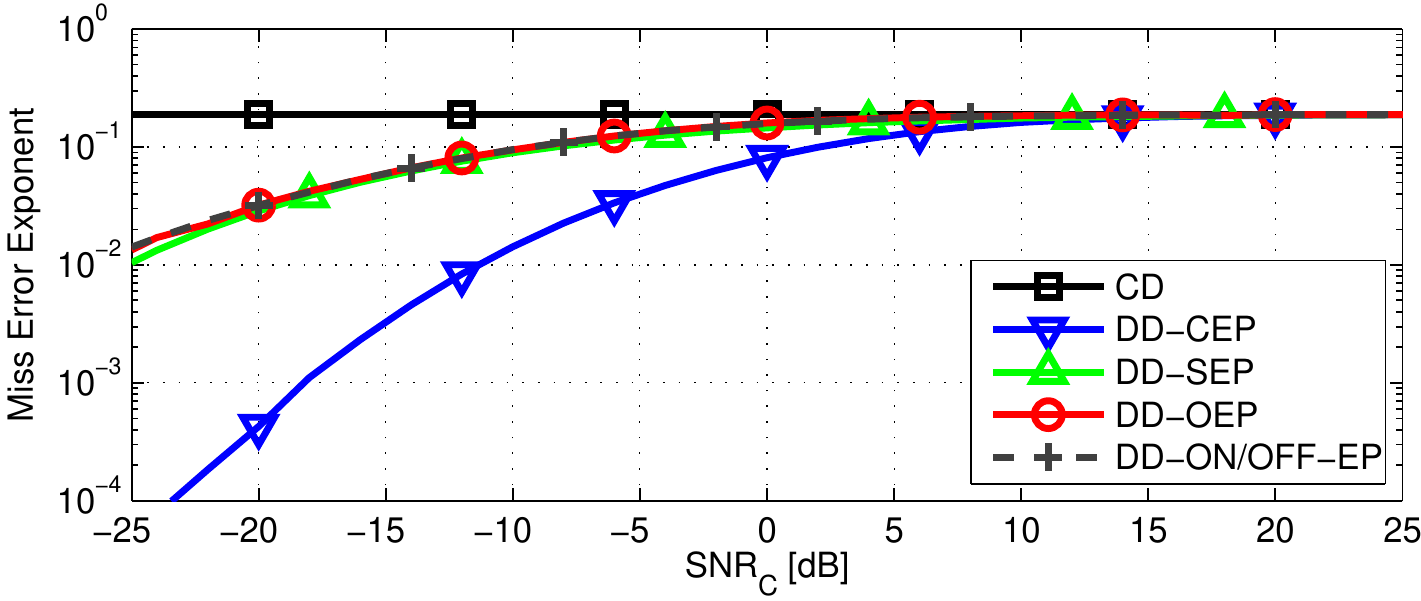}}
\caption{Error exponents for CD and DD detectors against $\text{SNR}_\text{C}$ (dB) for process PSD1 and PSD2.}
\label{fig:ExpVsEta}
\end{figure}
\begin{table}[b]
\begin{center}
\begin{tabular}{| c | c | c |}
\hline
\multicolumn{3}{|c|}{Energy gap (in dB) wrt CEP for PSD1} \\
\hline\hline
$\kappa_m$ & SEP & OEP, $\beta=0.6$\\
\hline
$10^{-1}$ & 2 & 2.5 \\
\hline
$10^{-2}$ & 2 & 5 \\
\hline
$10^{-3}$ & 2 & 10 \\
\hline
\hline
\end{tabular}
\begin{tabular}{| c | c | c |}
\hline
\multicolumn{3}{|c|}{Energy gap (in dB) wrt CEP for PSD2} \\
\hline\hline
$\kappa_m$ & SEP & OEP, $\beta=0.2$\\
\hline
$10^{-1}$ & 10.5 & 11  \\
\hline
 $10^{-2}$ & 14 & 15 \\
\hline
 $10^{-3}$ & >14 & >15 \\
\hline
\hline
\end{tabular}
\end{center}
\caption{Energy saving allocating energy in the sensors.}
\label{tab:energyGap}
\end{table}


A remarkable result is obtained when the scheme DD-ON/OFF-EP is used. Comparing with the optimal strategy, DD-OEP, we observe that the differences between the exponents obtained with both strategies are negligible. This shows that knowing $\Theta_{\beta*}$ is much more important than knowing the optimal gains $\{\gamma_{k'}^n\}$. 
This observation leads to a very simple strategy: at the beginning of the detection process, the sensors are communicated the optimal set $\Theta_{\beta*}$ (e.g., the FC broadcasts it through a low-rate feedback channel); then, on all channel uses, each node transmits with the same gain using PFS-MAC.

In Fig. \ref{fig:beta}, we plot the optimum fraction of DoF $\beta^*$ constrained to the allowed fraction of DoF $\beta$ against this constraint for processes PSD1 and PSD2. 
For low values of $\beta$, $\beta^*$ increases linearly with slope one ($\beta^*\leq \beta$) up to a certain value where $\beta^*$ saturates. The saturation effect is observed for different levels depending on $\text{SNR}_\text{C}$ and on the frequency selectivity of the process, which is related to its correlation. Thus, a strongly correlated process (PSD2) needs relatively less channel uses than a weakly correlated process (PSD1). When $\text{SNR}_\text{C}$ is high, saturation occurs for high values of $\beta$. Conversely, if $\text{SNR}_\text{C}$ is low, the optimum scheme uses the available energy $E_t$ in each sensor to transmit more reliably a reduced set of frequencies where the PSD of the process is high. 
As a remark, the asymptotically optimal orthogonal strategy PFS-MAC not only allows to save a valuable amount of energy in the sensors but also allows to save channel uses (i.e., bandwidth, detection delay).  
\begin{figure}[t]
\centering
\subfigure[PSD1.\label{fig:beta1}]{\includegraphics[width=1\linewidth]{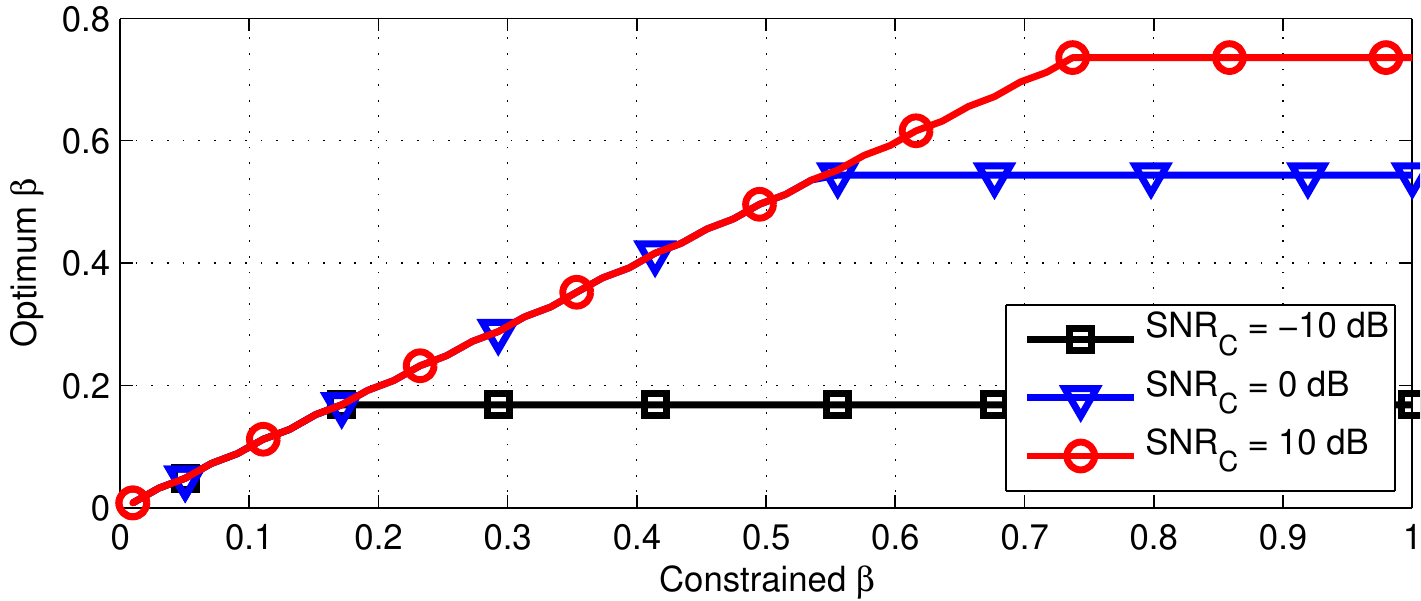}}
\subfigure[PSD2.\label{fig:beta2}]{\includegraphics[width=1\linewidth]{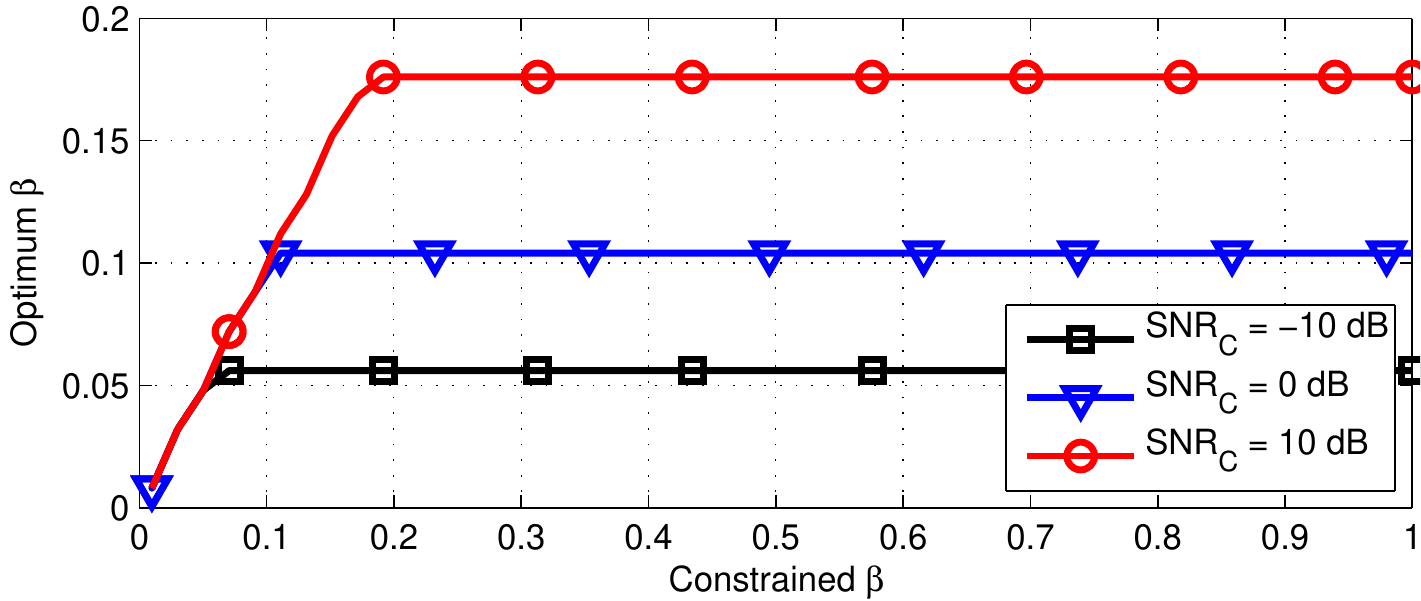}}
\caption{Optimum constrained fraction of DoF against the constrained fraction of DoF.}
\label{fig:beta}
\end{figure}
In Fig. \ref{fig:EP}, we plot the optimum energy profiles as a function of the normalized frequency for both processes and for several values of $\text{SNR}_\text{C}$, when the constraint on the fraction of DoF is inactive ($\beta=1$). Note that these figures are closely related to Fig. \ref{fig:beta} since the Lebesgue measure of the support of the optimum energy profile is indeed $\beta^*$. 
\begin{figure}[bt]
\centering
\subfigure[PSD1.\label{fig:EP1}]{\includegraphics[width=1\linewidth]{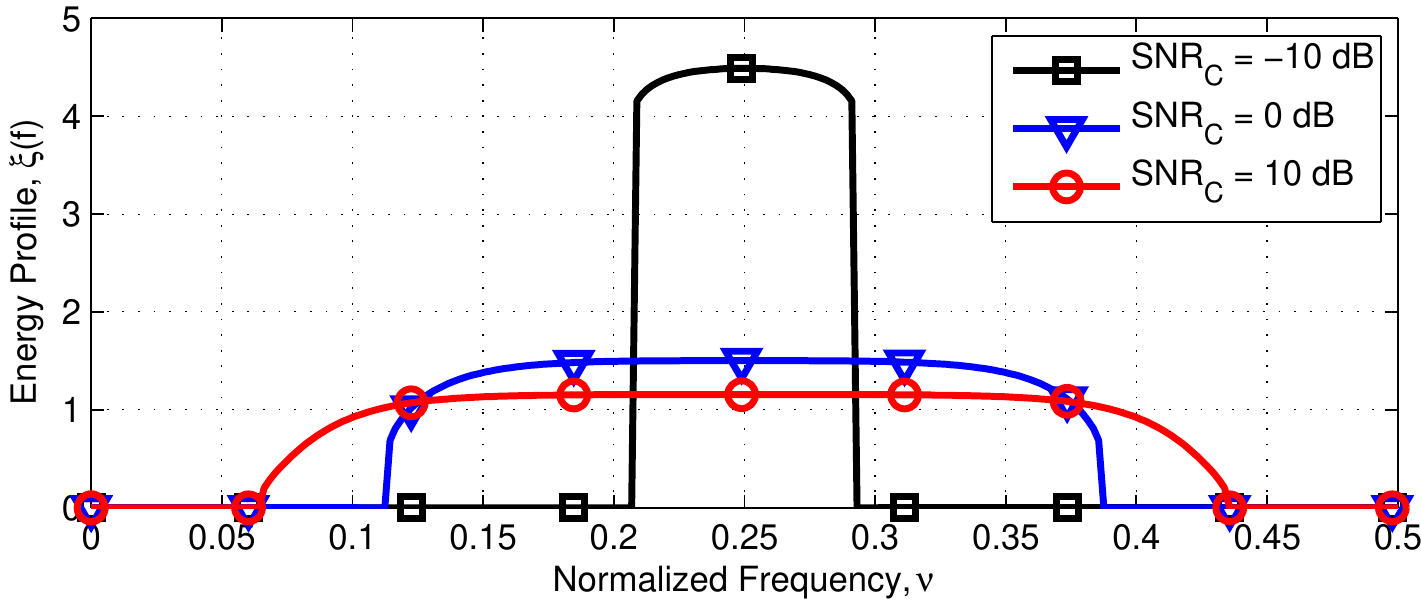}}
\subfigure[PSD2.\label{fig:EP2}]{\includegraphics[width=1\linewidth]{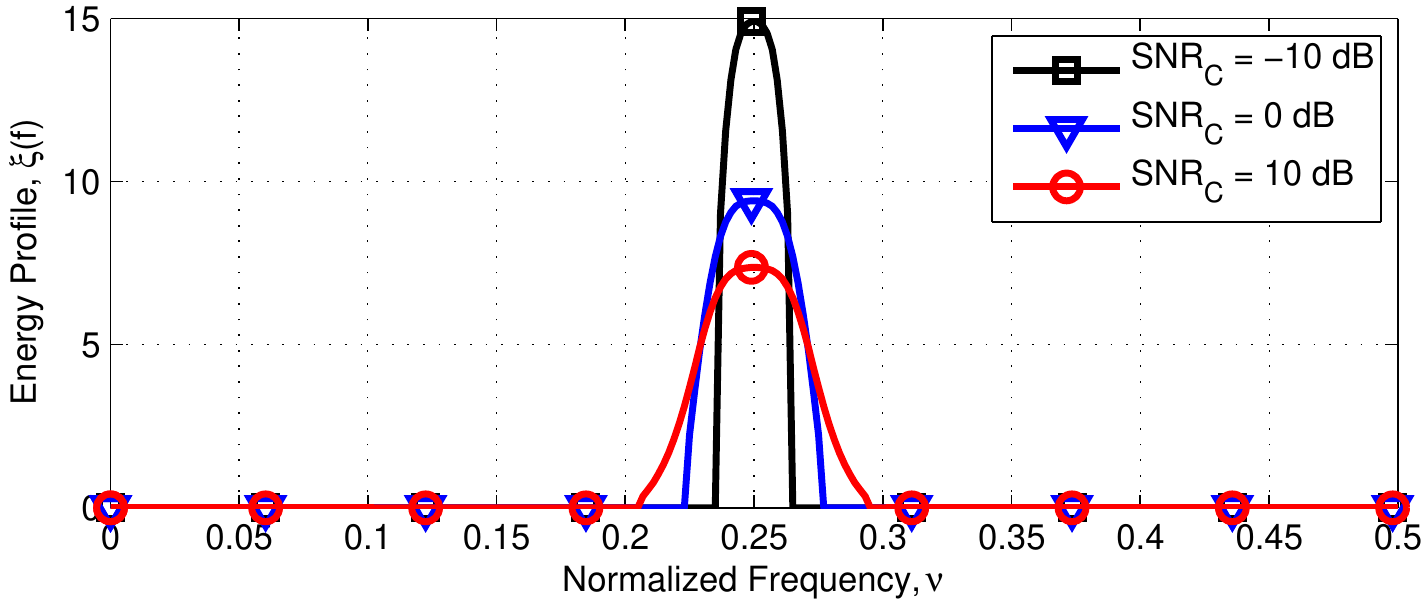}}
\caption{Energy profiles for DD-OEP detector against the normalized frequency for process with PSD1 and PSD2 without DoF constraint ($\beta=1$). Only $[0,0.5]$ frequency interval is shown due to the even symmetry of $\xi(\nu)$ inherited from $\phi(\nu)$.}
\label{fig:EP}
\end{figure}

\section{2 and 3 Dimensional WSNs}
\label{sec:p-WSN}
In this section we enlarge the model to networks with up to 2 dimensions in space and we include the time dimension. The extension to more dimensions is straightforward. 
The measurement taken by the sensor at coordinate $(i,j)$ at time $m$ under each hypothesis is:
\begin{equation*}
\left\{
\begin{array}{lllcl}
\Hip_1: & x_{ij}(m)&=s_{ij}(m) + v_{ij}(m),& i&=1,\dots,n_1,\\
\Hip_0: & x_{ij}(m)&=v_{ij}(m),& j&=1,\dots,n_2,\\
& & &   m&=1,\dots,n_3.
\end{array}
\right.
\end{equation*}
where the parameters $n_1$, $n_2$, and $n_3$ are chosen for describing one, two, or three dimensional networks. In these networks, sensors are distributed along a line (1D) or in space (2D), and they take several measurements during different time instants (time dimension). We assume that $s_{ij}(m)$ is a zero-mean circularly-symmetric complex Gaussian stationary process with variance $\sigma^2_s$ and power spectral density (PSD) $\phi(\ve{\nu})$, where $\ve{\nu}\in[0,1]^p$ is the vector of frequencies normalized to the interval $[0,1]$ and $p$ is the dimension of the network. 
$v_{ij}(m)$ is zero-mean circularly-symmetric complex white Gaussian noise independent of $s_{ij}(m)$ with variance $\sigma_v^2$. Therefore, $x_{ij}(m)$ is Gaussian distributed under each hypothesis. 

All sensors transmit synchronously over a MAC, and the received signal at the FC during the $k'$-th channel use is: 
\begin{equation}
z_{k'}^n=\sum_{i=1}^{n_1} \sum_{j=1}^{n_2} g_{ij k'}(\{x_{ij}(m)\}_{m=1}^{n_3}) + w_{k'},\ \ k'=1,\dots,n',
\label{eq:zpD}
\end{equation}
where $g_{ij k'}(\{x_{ij}(m)\}_{m=1}^{n_3})$ is the symbol transmitted by the sensor located at $(i,j)$ during channel use $k'$ using the encoding function $g_{ij k'}(\cdot)$, $w_{k'}$ is a zero-mean circularly-symmetric complex white Gaussian noise independent of everything else with variance $\sigma_w^2$, and $n'$ is the number of MAC uses. As in the one dimensional case, we will consider linear encoding functions only. To each 3-tuple index $(i,j,m)$, we associate the following unwound index 
\begin{equation}
k=(i-1)n_2 n_3 + (j-1)n_3 + m
\label{map}
\end{equation} 
that takes values $k=1,\dots,n$ with $n=n_1 n_2 n_3$. This is a one to one mapping, i.e., each 3-tuple index $(i,j,m)$ can be recovered from the index $k$ and vice versa. We denote this as $k\leftrightarrow (i,j,m)$. If we define the following column vectors: $\ve{s}=[s^n_1,\dots,s^n_n]^T$, $\ve{v}=[v^n_1,\dots, v^n_n]^T$ and $\ve{x}=[x_1^n,\dots, v^n_n]^T$, the vector of measurements at the FC are expressed as in (\ref{eq:zvec}) and hence, its statistic  results (\ref{TD1}). However, we assume now that the covariance matrix $\Sigma_n$ is a $p$-level Toeplitz matrix \cite[Sec. 6.4]{Tilli:Toeplitz} instead of a regular Toeplitz matrix.
The precoding matrix for the $p$-dimensional ($p$-D) PCS-MAC strategy has the same expression (\ref{PCS-matrix}) although the eigenvectors of the $p$-level Toeplitz matrix $\Sigma_n$ change. However, we need to redefine the precoding matrix for the $p$-D PFS-MAC strategy.

\begin{definition}[$p$-D PFS-MAC]
\label{PFS-MAC-pD}
For each $n=n_1\times\cdots\times n_p$, denote by $(j_1,j_2,\dots,j_n)$ a permutation of $\{1,2,\dots,n\}$ such that $\phi[j_1]\geq\phi[j_2]\geq \cdots \geq \phi[j_n]$ where $j_{k'}\leftrightarrow(i_{1}',\dots,i_{p}')$, $i_{l}'\in [1,n_l]$, $l=1,\dots,p$ and $\phi[j_{k'}]$ is defined as a sample of the $p$-D PSD: $\phi[j_{k'}]\defeq\phi\left(\frac{i_{1}'-1}{n_1},\dots,\frac{i_{p}'-1}{n_p}\right)$ with $k'=1,\dots,n'$. The precoding matrix of the $p$-D PFS-MAC strategy is
\begin{equation}
C_{nn'}=F_{nn'}\Delta_{n'},
\label{PFS-matrix-pD}
\end{equation}
where $F_{nn'}=[\ve{f}_{j_1^n}^n,\dots,\ve{f}_{j_{n'}^n}^n]$ is a sub-matrix of the $p$-D DFT matrix of size $n\times n$, i.e.,  
$\ve{f}^n_{k'}=[f_{1k'}^n,f_{2k'}^n,\dots, f_{nk'}^n]^T$ with $f_{kk'}^n = \frac{1}{\sqrt{n}}\exp(\jmath 2\pi \sum_{l=1}^p \frac{(i_l-1) (i_{l,k'}'-1)}{n_l})$, $k=1,\dots,n$; $k\leftrightarrow(i_1,\dots,i_p)$, and $\Delta_{n'}=\diag(\gamma^n_{1},\dots,\gamma^n_{n'})$.
\end{definition}

Similar to the 1D case, the objective of both $p$-D PCS-MAC and $p$-D PFS-MAC is to communicate the most important modes (frequencies) of the random process to the FC. Under both strategies, each sensor needs to know the complete vector of time measurements to transmit during each channel use a different $p$-tuple index $(i_{1,k'}', \dots,i_{p,k'}')$. Once the FC has the $n'$ measurements, it builds the statistic (\ref{TD1}) to make a decision.
 
 We need now a multi-dimensional version of the Toeplitz theorem.
\begin{theorem}[Toeplitz Distribution for $p$-D processes]
\label{ToepTheo-pD}
For a Hermitian $p$-level Toeplitz matrix $\Sigma_{n}$ generated by the spectral density $\phi(\ve{\nu})$ which belongs to the Wiener class, with $\ve{\nu}=(\nu_1,\dots,\nu_p)$ and multilevel index $\ve{n}=(n_1,\dots,n_p)$, $n=n_1\times\cdots\times n_p$, let $\{\lambda_{k}^n\}_{k=1}^{n}$ be the eigenvalues of $\Sigma_{n}$ contained on the interval $[\delta_1 , \delta_2]$, let $F(\cdot)$ be a continuous function defined on $[\delta_1 , \delta_2]$ and assume that $\int_{\ve{\nu}:\phi(\ve{\nu})=\delta}F(\phi(\ve{\nu}))d\ve{\nu} =0$ 
for any $\delta\in [\delta_1 , \delta_2]\cap \Delta$ where $\Delta\subseteq [0,\infty)$ then
\begin{equation*}
\lim_{\ve{n}\rightarrow\infty}\
\frac{1}{n}\sum_{k=1}^{n} F(\lambda_{k}^n\in \Delta ) = \int_{\phi^{-1}(\Delta)} F(\phi(\ve{\nu}))d\ve{\nu}.
\end{equation*}
where 
$\ve{n}\rightarrow\infty$ means that all components of $\ve{n}$ tend to infinity simultaneously.
\end{theorem}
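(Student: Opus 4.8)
The plan is to mirror the one-dimensional argument behind Theorem~\ref{ToepTheo}, replacing every appeal to Gray's scalar Toeplitz/circulant machinery by its $p$-level counterpart from \cite[Sec.~6.4]{Tilli:Toeplitz}; we read the left-hand side of the claim as $\frac1n\sum_{k=1}^n F(\lambda_k^n)\bigone(\lambda_k^n\in\Delta)$, exactly as in Theorem~\ref{ToepTheo}. First I would introduce the $p$-level circulant matrix $B_{\ve{n}}(\phi)$ generated by the grid samples $\phi\bigl(\frac{i_1-1}{n_1},\dots,\frac{i_p-1}{n_p}\bigr)$, whose eigenvalues are precisely those samples, and show that $\Sigma_n\sim B_{\ve{n}}(\phi)$ in the sense of asymptotic equivalence. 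Since $\phi$ is in the Wiener class, its multi-index Fourier coefficients $\{a_{\ve{k}}\}_{\ve{k}\in\mathbb{Z}^p}$ are absolutely summable, which both bounds $\|\Sigma_n\|$ and $\|B_{\ve{n}}(\phi)\|$ uniformly by $\sum_{\ve{k}}|a_{\ve{k}}|$ and forces $|\Sigma_n-B_{\ve{n}}(\phi)|\to0$ as $\ve{n}\to\infty$, as established in \cite[Sec.~6.4]{Tilli:Toeplitz}; in particular $\phi$ is continuous on $[0,1]^p$ and its range is contained in $[\delta_1,\delta_2]$.

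Second, from $\Sigma_n\sim B_{\ve{n}}(\phi)$ and the fact that asymptotically equivalent Hermitian sequences share the same eigenvalue distribution, I would obtain the multilevel Szeg\H{o} limit: for every function $G$ continuous on $[\delta_1,\delta_2]$,
\begin{equation*}
\lim_{\ve{n}\to\infty}\frac1n\sum_{k=1}^n G(\lambda_k^n)=\lim_{\ve{n}\to\infty}\frac1n\sum_{i_1=1}^{n_1}\cdots\sum_{i_p=1}^{n_p}G\!\left(\phi\!\left(\tfrac{i_1-1}{n_1},\dots,\tfrac{i_p-1}{n_p}\right)\right)=\int_{[0,1]^p}G(\phi(\ve{\nu}))\,d\ve{\nu},
\end{equation*}
the first equality being the standard consequence of asymptotic equivalence and the second just Riemann integrability of the continuous map $G\circ\phi$ on the cube.

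Third -- and this is where the level-set hypothesis enters -- I would pass from continuous integrands to $F(\cdot)\bigone(\cdot\in\Delta)$. Fix $\epsilon>0$ and sandwich $\bigone_\Delta$ between continuous functions $g_\epsilon^-\le\bigone_\Delta\le g_\epsilon^+$ on $[\delta_1,\delta_2]$ that differ from $\bigone_\Delta$ only on an $\epsilon$-neighbourhood $N_\epsilon(\partial\Delta)$ of the boundary of $\Delta$. Applying the second step to the continuous functions $Fg_\epsilon^{\pm}$ and letting $\epsilon\downarrow0$, the gap between the resulting upper and lower limits is dominated by $\int_{\phi^{-1}(N_\epsilon(\partial\Delta))}|F(\phi(\ve{\nu}))|\,d\ve{\nu}$, which by continuity of $\phi$ and dominated convergence tends to $\int_{\phi^{-1}(\partial\Delta\cap\bar\Delta)}|F(\phi(\ve{\nu}))|\,d\ve{\nu}$. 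The pushforward of the uniform measure $\text{L}$ on $[0,1]^p$ by $\phi$ can place mass only on the at most countably many levels $\delta$ attained by $\phi$ on a positive-measure set, and for each such $\delta$ lying in $[\delta_1,\delta_2]\cap\Delta$ the hypothesis gives $\int_{\ve{\nu}:\phi(\ve{\nu})=\delta}|F(\phi(\ve{\nu}))|\,d\ve{\nu}=0$; hence the residual term vanishes, both one-sided limits coincide, and their common value is $\int_{\phi^{-1}(\Delta)}F(\phi(\ve{\nu}))\,d\ve{\nu}$, as claimed.

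I expect the third step to be the only delicate point: one must make the ``$\epsilon$-neighbourhood of $\partial\Delta$'' sandwich rigorous for an arbitrary Borel set $\Delta\subseteq[0,\infty)$, control $\text{L}\{\ve{\nu}:\phi(\ve{\nu})\in N_\epsilon(\partial\Delta)\}$ through the (uniform) continuity of $\phi$, and verify that the atoms of the pushforward measure $\text{L}\circ\phi^{-1}$ are exactly the levels annihilated by the hypothesis. The first two steps are routine: the multilevel asymptotic-equivalence bookkeeping is supplied by \cite[Sec.~6.4]{Tilli:Toeplitz}, and the Riemann-sum convergence is immediate from the continuity of $\phi$. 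As in the one-dimensional case, the full statement can alternatively be invoked directly as the $p$-level generalization of \cite[Corollary~4.1]{GrayToeplitz} contained in \cite[Sec.~6.4]{Tilli:Toeplitz}.
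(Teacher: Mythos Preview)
Your proposal is correct and aligns with the paper's own treatment: the paper's proof is simply the citation ``See \cite[Th.~6.4.1]{Tilli:Toeplitz} and \cite[Corollary~4.1]{GrayToeplitz}'', and you are unpacking precisely what lies behind those references---the $p$-level circulant approximation, the Szeg\H{o} limit for continuous $G$, and the sandwiching step to accommodate the indicator $\bigone_\Delta$---before noting yourself that the statement can be invoked directly from those sources. In short, same approach, just more explicit.
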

\emph{Proof:} See \cite[Th. 6.4.1]{Tilli:Toeplitz} and \cite[Corollary 4.1]{GrayToeplitz}.

Using Th. \ref{ToepTheo-pD} and Th. \ref{GETheo} together with the $p$-D PCS-MAC and $p$-D PFS-MAC strategies we obtain  the same results given in Section \ref{sec:exponents}. This is summarized in the following theorem.
\begin{theorem}[DD Error Exponents for $p$-D networks]
\label{DDexp-pD}
The error exponents for the $p$-D PCS-MAC and $p$-D PFS-MAC strategies are given by Th. \ref{DDexp} and Corollary \ref{cor:DDMiss} under the same hypotheses considering now that the normalized frequency is a $p$-dimensional variable $\ve{\nu}$ and $\Theta$ is a $p$-dimensional set.
\end{theorem}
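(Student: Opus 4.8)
The plan is to lift the one-dimensional arguments of Appendices \ref{App:HT-DD}, \ref{App:ProofDD} and \ref{App:ProofMissExp} almost verbatim, replacing the scalar Fourier and Toeplitz machinery by its $p$-level counterpart. The first and only substantive step is to establish the $p$-D analogue of Lemma \ref{lem:HT-DD}: under either $p$-D PCS-MAC or $p$-D PFS-MAC the covariance matrices $\Xi_{0,n'},\Xi_{1,n'}$ of $\ve z$ in (\ref{TD1}) become asymptotically diagonal with diagonal entries governed by (\ref{eq:testFC}). For $p$-D PCS-MAC this is exact for every $n$: since $C_{nn'}=U_{nn'}\Delta_{n'}$ is built from eigenvectors of the $p$-level Toeplitz matrix $\Sigma_n$, we have $C_{nn'}^HC_{nn'}=\Delta_{n'}^2$ and $C_{nn'}^H\Sigma_nC_{nn'}=\Delta_{n'}\diag(\lambda_1^n,\dots,\lambda_{n'}^n)\Delta_{n'}$, so $\Xi_{i,n'}$ is diagonal. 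For $p$-D PFS-MAC, $F_{nn'}$ has orthonormal columns drawn from the unitary $p$-D DFT matrix $F_n$, so $C_{nn'}^HC_{nn'}=\Delta_{n'}^2$ exactly, while $C_{nn'}^H\Sigma_nC_{nn'}$ differs from $C_{nn'}^HB_nC_{nn'}$ by $C_{nn'}^H(\Sigma_n-B_n)C_{nn'}=\Delta_{n'}\bigl(F_n^H(\Sigma_n-B_n)F_n\bigr)_{\mathrm{lead}}\Delta_{n'}$, where $B_n$ is the $p$-level circulant matrix generated by $\phi(\ve\nu)$ (Definition \ref{def:circ}, multilevel version) and $(\cdot)_{\mathrm{lead}}$ denotes the leading $n'\times n'$ block. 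The $p$-D DFT diagonalizes $B_n$ with eigenvalues $\{\phi[j_k]\}$, and the remainder vanishes in weak norm because that norm is unitarily invariant, $\|\Delta_{n'}\|\le\sqrt{\sup_{\Theta_\beta}\xi}<\infty$ (a Riemann-integrable $\xi$ is bounded), and $|\Sigma_n-B_n|\to0$ by the $p$-level version of Lemma \ref{lem:equiv} \cite[Sec.~6.4]{Tilli:Toeplitz}. Combined with $n'/n\to\beta$ and the sampling (\ref{eq:gamma}), this yields the limiting test (\ref{eq:testFC}) over $\ve\nu\in\Theta_\beta\subset[0,1]^p$.

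With the reduction in hand, the large-deviation part is routine. The limiting PSDs $h_1(\ve\nu)=(\phi(\ve\nu)+\sigma_v^2)\xi(\ve\nu)+\sigma_w^2$ and $h_0(\ve\nu)=\sigma_v^2\xi(\ve\nu)+\sigma_w^2$ are bounded above ($\phi$ is in the Wiener class, hence bounded, and $\xi$ is bounded on $\Theta_\beta$) and bounded below by $\sigma_w^2>0$, so assumptions $A_1$ and $A_2$ of Theorem \ref{GETheo}, read over $[0,1]^p$, are satisfied; Theorem \ref{GETheo} then gives the LDP and its rate function. Evaluating the Fenchel--Legendre transform is exactly the computation in Appendix \ref{App:ProofDD}, the sole change being that every one-dimensional integral over $[0,1]$ or $\Theta_\beta$ becomes a $p$-fold integral over $\Theta_\beta\subset[0,1]^p$; this is legitimate because Theorem \ref{ToepTheo-pD} now plays the role of Theorem \ref{ToepTheo} and the measure-theoretic description of $\Theta_\beta$ in Definition \ref{def:freqSet} is dimension-free. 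This reproduces (\ref{DDfa})--(\ref{DDm}) with $\Gamma_\text{DD}(\ve\nu)=\xi(\ve\nu)\phi(\ve\nu)/(\xi(\ve\nu)\sigma_v^2+\sigma_w^2)$ and $\Theta=\Theta_\beta$, and the two strategies share the same exponents since both produce the same limiting $h_i$. Setting $\tau=m_0(\Gamma_\text{DD})+\epsilon$ and letting $\epsilon\downarrow0$ yields (\ref{eq:DDMiss}). The optimization of $\kappa_{m,\alpha}^\text{DD}$ over $\xi(\cdot)$ subject to $\int_{\Theta_\beta}\xi(\ve\nu)\,d\ve\nu\le E_t/(\sigma_s^2+\sigma_v^2)$ again has objective and constraint that are both integrals over $\Theta_\beta$ of a pointwise function of $\xi(\ve\nu)$ and $\phi(\ve\nu)$, so the Euler--Lagrange condition is the same pointwise cubic (\ref{cubic})--(\ref{eq:cubicCoeff}) with $\phi=\phi(\ve\nu)$, and the optimal profile and support $\Theta_{\beta*}$ are given by (\ref{xiSol})--(\ref{discrim}).

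The step I expect to cost the most care is the reduction for $p$-D PFS-MAC, i.e.\ verifying that replacing $\Sigma_n$ by the $p$-level circulant $B_n$ remains harmless after projecting onto the selected DFT columns and rescaling by $\Delta_{n'}$ — in $p$ dimensions the book-keeping of multilevel indices and of the eigenvalue versus PSD-sample ordering is easy to get wrong. The unitary-invariance argument above is the way I would keep it clean: embed $F_{nn'}$ in the full unitary $F_n$, use invariance of the weak norm under $F_n$, and control $\|\Delta_{n'}\|$ by $\sup_{\Theta_\beta}\xi$. Everything else is the 1D proof with scalar frequencies replaced by $\ve\nu$ and Theorem \ref{ToepTheo-pD} invoked in place of Theorem \ref{ToepTheo}; no new analytic input is needed.
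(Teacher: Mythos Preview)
Your proposal is correct and follows essentially the same route as the paper: the paper's proof is a single line, ``Apply Th.~\ref{DDexp} using (\ref{map}),'' and what you have written is a faithful expansion of that sentence, replacing the $1$-D Toeplitz/circulant and Fourier machinery by its $p$-level counterpart via Theorem~\ref{ToepTheo-pD} and carrying the arguments of Appendices~\ref{App:HT-DD}--\ref{App:ProofMissExp} through unchanged. The only cosmetic difference is that the paper leans on the unwinding index (\ref{map}) to reduce bookkeeping, whereas you work directly with the multilevel DFT and the unitary-invariance of the weak norm; both lead to the same asymptotic equivalence (\ref{eq:asympEq}) in the $p$-level setting.
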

\begin{proof}
Apply Th. \ref{DDexp} using (\ref{map}). 
\end{proof}

The same energy profiles of subsection \ref{sec:Profiles} are defined if we consider again that the normalized frequency is a $p$-dimensional variable $\ve{\nu}$ and $\Theta$ is a $p$-dimensional set.

\section{Numerical experiment for 2D}
\label{sec:Applications}
In this section we compare the theoretical results with a Monte Carlo simulation for detecting a 2D correlated process described by the following partial differential equation:
\begin{align}
\tilde{a}_x \frac{\partial^2 s(x,y)}{\partial x^2} + \tilde{a}_y\frac{\partial^2 s(x,y)}{\partial y^2} + \tilde{a}_0 s(x,y) = q(x,y)
\label{pde}
\end{align} 
where $q(x,y)$ is the random source, assumed to be white and Gaussian with PSD $\sigma^2_q$, 
and $\tilde{a}_x$, $\tilde{a}_y$ and $\tilde{a}_0$ are constants. Using a second order approximation, we discretize (\ref{pde}),
\begin{align}
\frac{\partial^2 s(x,y)}{\partial x^2} &\approx\frac{s(x+h,y)-2s(x,y)+s(x-h,y)}{h^2}\\
\frac{\partial^2 s(x,y)}{\partial y^2} &\approx\frac{s(x,y+h)-2s(x,y)+s(x,y-h)}{h^2}
\end{align}
 where $h$ is the discretization step in both directions. The discrete equation results
 \begin{align}
 a_x (s_{i+1,j}+s_{i-1,j}) + a_y (s_{i,j+1} + s_{i,j-1}) + a_0 s_{i,j} = q_{i,j}, \nonumber\\
 i=1,\dots,n_1,\; j=1,\dots,n_2.\nonumber
 \label{fde}
 \end{align} 
 with $a_x=\tilde{a}_x/h^2$, $a_y=\tilde{a}_y/h^2$ and $a_0=\tilde{a}_0- 2(\tilde{a}_x + \tilde{a}_y)/h^2$. The PSD of the process is 
 \begin{align}
 \phi(\omega_x,\omega_y)= \frac{\sigma^2_q}{(a_0 + 2 a_x\cos(\omega_x) + 2 a_y\cos(\omega_y))^2}
 \end{align}
and its variance is $\sigma_s^2$. If we organize the samples of the process in a vector as in Section \ref{sec:p-WSN}, (\ref{fde}) results a linear system: $A_n \ve{s}_n = \ve{q}_n$. In there, $\ve{s}_n$ and $\ve{q}_n$ are vectors of dimension $n=n_1 n_2$, and $A_n$ is a two-level banded Toeplitz matrix of size $n\times n$ with the sub-block matrix $B_{n_1}$:
\begin{equation*}
A_n=\begin{bmatrix}
B_{n_1} &a_x I_{n_1}  	  \\ 
a_x I_{n_1}& B_{n_1} & \ddots   \\ 
& \ddots  &\ddots \\ 
\end{bmatrix},\,
B_{n_1}=\begin{bmatrix}
a_0 & a_y  	  \\ 
a_y & a_0 & \ddots   \\ 
 &\ddots  &\ddots\\ 
\end{bmatrix},
\end{equation*}
where only non-zero elements are indicated. Assuming that $A_n$ is non-singular, the covariance matrix of the process is
\begin{equation}
\label{eq:cov2D}
\Sigma_n=\sigma^2_q (A_n A_n^T)^{-1}
\end{equation}
which is in general a non-Toeplitz matrix. However, $A_n$ is asymptotically a circulant matrix. The product of two circulant matrices, as well as the inverse of a circulant matrix, are circulant matrices \cite[p. 50, 63 and 67]{GrayToeplitz}.
Circulant matrices are a particular case of Toeplitz matrices. Therefore, $\Sigma_n$ is asymptotically a 2-level Toeplitz matrix and we apply Th. \ref{DDexp-pD} to compute the miss error exponent subject to any fixed level of false alarm error probability $\alpha\in(0,1)$.

In Fig. \ref{fig:Mont} we show the estimation of the  miss error exponent as a function of the number of sensors in the network using the Monte Carlo method with the following parameters: $\mathrm{SNR}_\mathrm{C}=-10$ dB, $a_0=-5$, $a_x = a_y=1$, $10^6$ experiments and $\alpha=10^{-2}$. 
We consider that the sensors are placed in a regular square grid, i.e., $n_1=n_2$ and the total amount of sensors in the network is $n=n_1 n_2$. The threshold of the test is modified for each $n$ in order to keep the false alarm error probability constant. We also plot the theoretical miss error exponent. We observe that both estimated miss error exponents converge to their corresponding theoretical values. We also note that both decentralized schemes PCS-MAC and PFS-MAC have almost the same performance for the number of sensor considered in the figure, which validates the circulant approximation of the product of Toeplitz matrices in (\ref{eq:cov2D}).       
\begin{figure}[htb]
\centering
\includegraphics[width=\linewidth]{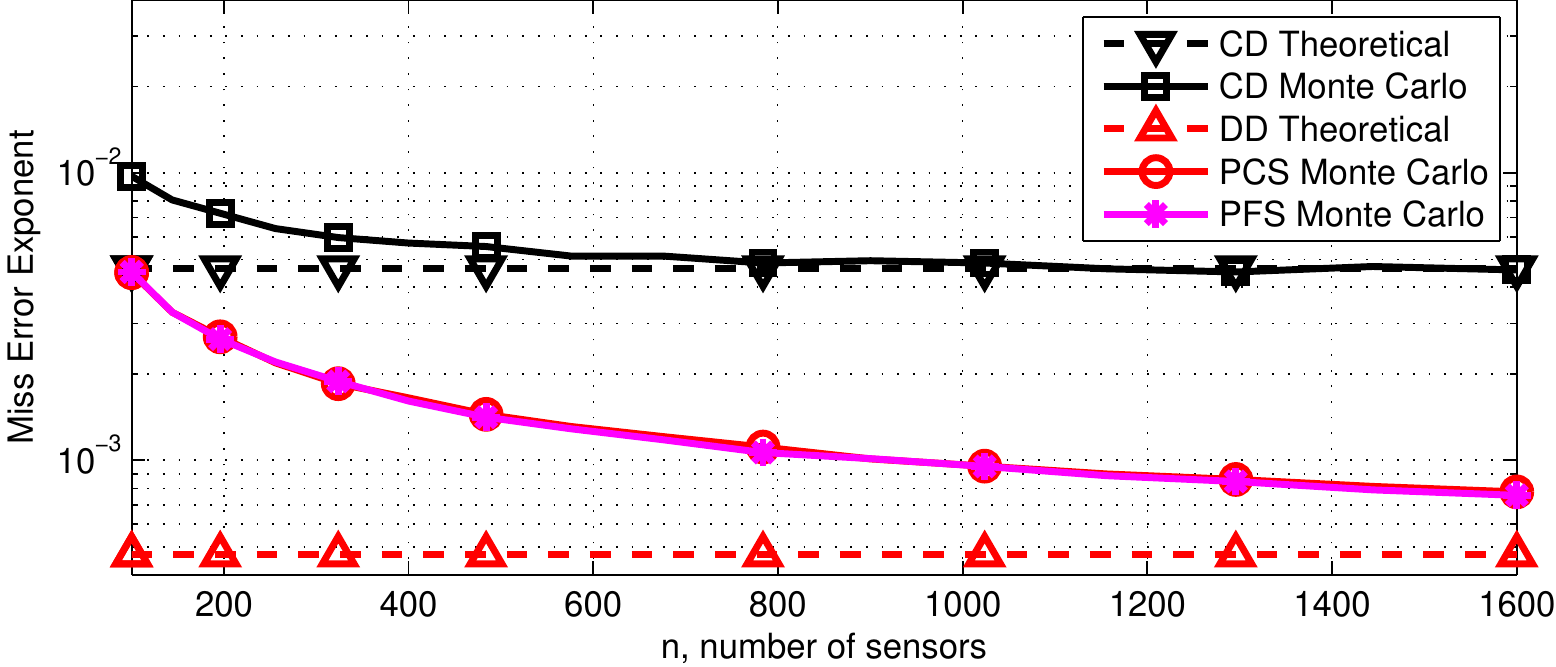}
\caption{Miss error probability vs. the amount of sensors in the network.}
\label{fig:Mont}
\end{figure}


\section{Conclusions}
\label{sec:Conclusions}
We have proposed several schemes for distributed detection of 
circularly-symmetric complex Gaussian random processes with arbitrary correlation function both in space and time. These schemes take into account possible correlated measurements and use this correlation beneficially at the FC to build an appropriate statistic to make a decision. Considering a multiple access channel and imposing bandwidth and per-sensor energy constraints, we have obtained the optimal orthogonal 
scheme in terms of the miss error exponent.   
We have also shown that one of the proposed schemes is particularly attractive for WSNs with low-cost and energy-limited nodes because it requires only the set of modes to be transmitted, obtains significant energy saving in the low signal-to-noise ratio regime, 
and achieves a close-to-optimal (with negligible loss) performance in terms of miss error exponent. 

\appendices

\section{Proof of Theorem \ref{CDexp}: CD Error Exponents}
\label{App:ProofCD}
The asymptotic mean of the centralized LLR statistic defined in (\ref{TC1}) under the hypothesis $\Hip_i$, $i=0,1$ is
\begin{align*}
m^\text{CD}_i &=\lim_{n\rightarrow\infty} \Ex_i(T_c^n)\nonumber\\
&= \lim_{n\rightarrow\infty} \frac{1}{n}\left(\tr\left( \Sigma_{i,n}\left(\Sigma_{0,n}^{-1}-\Sigma_{1,n}^{-1} \right)\right) -  \log\frac{\det\Sigma_{1,n}}{\det\Sigma_{0,n}}\right).
\end{align*}
Considering that the eigenvalues of $\Sigma_{0,n}$ and $\Sigma_{1,n}$ are $\{\sigma^2_v\}$ and $\{\lambda_k^n+\sigma^2_v\}$, respectively, and defining $\rho^n_{\text{CD},k}=\frac{\lambda_k^n}{\sigma_v^2}$, we obtain 
$m^\text{CD}_0 
=\lim_{n\rightarrow\infty}\frac{1}{n} \sum_{k=1}^n 1-\frac{1}{1+\rho_{\text{CD},k}^n} -\log(1+\rho_{\text{CD},k}^n)$.
Using Th. \ref{ToepTheo} with $\Delta=[\delta_1,\delta_2]$, which implies $\phi^{-1}(\Delta)=[0,1]$, we finally have $m^\text{CD}_0 = m_0(\Gamma_\text{CD})$ defined (\ref{eq:m0}). 

To compute the error exponents we first need to verify the assumptions of Th. \ref{GETheo} with $h_0\defeq h_0(\nu)=\sigma^2_v$ $\forall \nu$ and $h_1\defeq h_1(\nu)= \phi(\nu)+ \sigma^2_v$: 
\begin{itemize}
\item[$A_1$)] $h_0$ is a positive constant and $\log h_0$ belongs to $L^1([0,1])$ trivially. $\log h_1\in L^1([0,1])$ is easily proved by noting that $\ess\inf (\phi(\nu)+\sigma^2_v)>0$ and $\ess\sup (\phi(\nu)+\sigma^2_v)<\infty$
because $\phi(\nu)$ is a power spectral density in the Wiener class, and therefore, it is essentially bounded.

\item[$A_2$)] $h_0/h_1= 1/(1+\Gamma_\text{CD})\in L^\infty([0,1])$ given that $\ess\inf\Gamma_\text{CD}\geq 0$ because $\Gamma_\text{CD}$ is a spectral density. $h_1/h_0= 1+\Gamma_\text{CD} \in L^\infty([0,1])$ is easily proved by considering again that $\phi(\nu)$ belongs to the Wiener class.
\end{itemize} 
Given that $(A_1)$ and $(A_2)$ are satisfied, the error exponents are obtained through the Fenchel-Legendre transform $\Lambda_{\text{CD},i}^*(x)$ of the LMGF $\Lambda_{\text{CD},i}(t)=\log\Ex_i\left[e^{t T_c^n}\right]$, $i=0,1$.
The following properties can  be verified: 
\begin{properties}[LMGF and its Fenchel Legendre Transform]
\leavevmode
\begin{itemize}
\item[$P_1)$] $\Lambda_{\text{CD},1}(-1)=\Lambda_{\text{CD},1}(0)=0$. 
\item[$P_2)$] $\Lambda_{\text{CD},1}'(-1)=m^\text{CD}_0$ and $\Lambda_{\text{CD},1}'(0)=m^\text{CD}_1$. 
\item[$P_3)$] $\Lambda_{\text{CD},1}(t)$ is a convex function and then $\Lambda_{\text{CD},1}(t)=\Lambda_{\text{CD},1}((1+t)0 - t(-1)) \leq (1+t)\Lambda_{\text{CD},1}(0) -t\Lambda_{\text{CD},1}(-1)=0$ $\forall t\in[-1,0]$. 
\item[$P_4)$] As the Fenchel-Legendre transform given in (\ref{FLT:Hi}) transforms convex functions into convex functions, $\Lambda^*_{\text{CD},1}(x)$ is a convex function. Moreover, $\Lambda^*_{\text{CD},1}(x)$ has a minimum of value 0 at $x=m^\text{CD}_1$ meaning that $\Lambda^*_{\text{CD},1}(x)$ is decreasing if $x<m^\text{CD}_1$.
\item[$P_5)$] $\Lambda_{\text{CD},0}(t)=\Lambda_{\text{CD},1}(t-1)$ and $\Lambda_{\text{CD},0}^*(x)=\Lambda_{\text{CD},1}^*(x) +x$. Therefore, $\Lambda_{\text{CD},0}^*(x)$ is also a convex function with a minimum of value 0 at $x=m^\text{CD}_0$ and it is an increasing function if $x>m^\text{CD}_0$.
\end{itemize}
\label{prop:LGMF}
\end{properties}

Using properties $(P_1)$-$(P_5)$ and (\ref{LDT4}) when $G=[\tau,\infty)$, $k_{fa}^\text{CD}=\Lambda_{\text{CD},0}^*(\tau)$ if $\tau>m_0^\text{CD}$ and $k_{fa}^\text{CD}=0$ if $\tau\geq m_0^\text{CD}$. Moreover, when $G=(-\infty,\tau]$, $k_{m}^\text{CD}=\Lambda_{\text{CD},1}^*(\tau)$ if $\tau < m_1^\text{CD}$ and $k_{m}^\text{CD}=0$ if $\tau\geq m_1^\text{CD}$. The usual interval of interest for the threshold $\tau$ is $[m_0^\text{CD},m_1^\text{CD}]$ and because $(P_1)$ and $(P_2)$ hold, the interval of optimization of $t$ in (\ref{FLT:Hi}) can be restricted to $[-1,0]$. Due to the convexity of $\Lambda_{\text{CD},1}(t)$ there exists a unique $t^*$ that solves (\ref{FLT:Hi}) and satisfies (\ref{eq:t}), which is obtained from deriving $\tau t-\Lambda_{\text{CD},1}(t)$. Then, the error exponents are (\ref{CDfa}) and (\ref{CDm}).

\section{Decentralized Error Exponents}
\subsection{Proof of Lemma \ref{lem:HT-DD}: Decentralized Hypothesis Testing}
\label{App:HT-DD}
We first consider the PCS-MAC scheme given in Def. \ref{PCS-MAC}. The covariance matrices under $\Hip_0$ and $\Hip_1$ are, respectively,
\begin{align}
\Xi_{0,n'}&=\sigma^2_v \Delta^2_{n'} + \sigma^2_w I_{n'},\nonumber\\
\Xi_{1,n'}&= ( \diag(\lambda_1^n,\dots,\lambda_{n'}^n)+ \sigma^2_v I_{n'}) \Delta^2_{n'} + \sigma^2_w I_{n'}.
\label{eq:PCSCov1}
\end{align}
Let $\theta_{i,k}$ be the $k$-th element of the diagonal matrix $\Xi_{i,n'}$, $i=0,1$. It is easy to prove that $\diag(\theta_{i,1},\dots,\theta_{i,n'})\sim \diag(h_{i}(\frac{0}{n}),\dots,h_{i}(\frac{n'-1}{n}))$. 
To recover the original dimension of the problem, define the $n$-dimensional vector $\tilde{\ve{z}}$ as the zero padding of the $n'$-dimensional vector of measurements $\ve{z}$, i.e., $\tilde{\ve{z}}=[\ve{z}^T \ve{0}^T]^T$. The covariance matrix of $\tilde{\ve{z}}$, $\diag(\theta_{i,1},\dots,\theta_{i,n'},0,\dots,0)$, is asymptotically equivalent to $\diag(h_{i}(\frac{0}{n}),\dots,h_{i}(\frac{n'-1}{n}),0,\dots,0)$.
Consider the following transformation $\ve{y}=F_n \tilde{\ve{z}}$, where $F_n$ is the DFT matrix of size $n\times n$. 
It is well known that applying an invertible transformation (in particular, the  orthogonal DFT matrix) to the data does not modify the performance of the statistic. Because matrix multiplication preserves asymptotic equivalence of matrices \cite[Th. 2.1 (3)]{GrayToeplitz}, we have that the covariance matrix of $\ve{y}$ is asymptotically circulant, i.e.,
\begin{align*}
& F_n\diag(\theta_{i,1},\dots,\theta_{i,n'},0,\dots,0)F_n^H\sim\\ 
& F_n\diag\left(h_{i}(0/n),\dots,h_{i}((n'\!-1)/n),0,\dots,0\right)F_n^H\! = B_n(\tilde{h}_i),
\end{align*}
where $B_n(\tilde{h}_i)$ is the circulant matrix generated by the samples of $\tilde{h}_i(\nu)$, $i=0,1$ and
\begin{equation}
\tilde{h}_i(\nu)=\left\{
\begin{array}{lll}
h_i(\nu) & \text{if} & \nu\in\Theta_\beta\\
0 & \text{if} & \nu\in[0,1]\setminus \Theta_\beta.
\end{array}
\right.
\end{equation} 
Therefore, $\tilde{h}_i(\nu)$ may be interpreted as the PSD of the measurements available at the FC under $\Hip_i$, $i=0,1$ and the test (\ref{eq:testFC}) can be formulated.

In the case of PFS-MAC, the covariance matrices under $\Hip_0$ and $\Hip_1$ are, respectively,
\begin{align}
\Xi_{0,n'}&=\sigma^2_v \Delta^2_{n'} + \sigma^2_w I_{n'},\nonumber\\
\Xi_{1,n'}&= \Delta_{n'}( F_{nn'}^H\Sigma_n F_{nn'} + \sigma^2_v I_{n'}) \Delta_{n'} + \sigma^2_w I_{n'}.
\label{eq:PFSCov1}
\end{align}
In this case, $F_{nn'}^H\Sigma_n F_{nn'}$ is not diagonal for finite $n'$. However, considering that asymptotic equivalence between matrices is preserved by matrix multiplication \cite[Th. 2.1 (3)]{GrayToeplitz}, and using Lem. \ref{lem:equiv} we have that $\Sigma_n\sim F_{nn'}\diag(\lambda_1^n,\dots,\lambda_{n'}^n) F_{nn'}^H $ if and only if 
\begin{equation}
F_{nn'}^H\Sigma_n F_{nn'}\sim \diag(\lambda_1^n,\dots,\lambda_{n'}^n),
\label{eq:asympEq}
\end{equation}
which makes (\ref{eq:PCSCov1}) and (\ref{eq:PFSCov1}) asymptotically equivalent, obtaining the same test of hypothesis (\ref{eq:testFC}).

\subsection{Proof of Theorem \ref{DDexp}: DD Error Exponents}
\label{App:ProofDD}
The asymptotic mean of the decentralized LLR statistic defined in (\ref{TD1}) under $\Hip_i$, $m^\text{DD}_i =\lim_{n\rightarrow\infty} \Ex_i(T_d^n)$, $i=0,1,$ depends entirely on the spectral density under $\Hip_i$.  A similar situation  occurs for the centralized detector in Th. \ref{CDexp}. Then, using Lem. \ref{lem:HT-DD}, we have that $m^\text{DD}_i=m_i(\Gamma_\text{DD})$ for both strategies PCS-MAC and PFS-MAC, where $m_i(\cdot)$, $i=0,1$, are defined in (\ref{eq:m0}) and (\ref{eq:m1}), respectively.

To compute the error exponents we first need to verify the assumptions of the modified version of the Gärdner-Ellis theorem with $h_0\defeq h_0(\nu)= \sigma_v^2\xi(\nu) + \sigma^2_w$ and $h_1\defeq h_1(\nu)=(\phi(\nu)+\sigma_v^2)\xi(\nu) + \sigma^2_w$:
\begin{itemize}
\item[$A_1)$] $\int_{\Theta_\beta}|\log(\xi(\nu)\sigma^2_v+ \sigma^2_w)|d\nu\leq\int_{\Theta_\beta}|\log(1+\frac{\xi(\nu)\sigma^2_v}{\sigma^2_w})|d\nu + |\log(\sigma^2_w)|\leq 1+ \frac{\sigma^2_v}{\sigma^2_w}\int_{\Theta_\beta}\xi(\nu)d\nu + |\log(\sigma^2_w)|<\infty$ 
because $\xi(\nu) \geq 0$, the energy constraint by (\ref{xiConstr}) and $|\log(x)|< |x|$ if $x\geq 1$. Then, $\log h_0\in L^1([0,1])$. $h_1\in L^1([0,1])$ is proved similarly by additionally  considering that $\phi(\nu)$ is a power spectral density in the Wiener class. 

\item[$A_2)$] $h_0/h_1= 1/(1+\Gamma_\text{DD})\in L^\infty([0,1])$ given that $\ess\inf\Gamma_\text{DD}\geq 0$ since $\xi(\nu)\geq 0$ and $\ess\inf \phi(\nu)\geq 0$. $h_1/h_0= 1+\Gamma_\text{DD} \in L^\infty([0,1])$ is easily proved by considering again that $\phi(\nu)$ belongs to the Wiener class and therefore $\ess\sup\phi(\nu)=M_\phi$ is a finite constant. Then, $\ess\sup\Gamma_\text{DD}\leq M_\phi/\sigma^2_v$.
\end{itemize}
Now, we apply Th. \ref{GETheo} to obtain the error exponents by computing the Fenchel-Legendre transforms $\Lambda_{\text{DD},i}^*(x)$ of the LMGF $\Lambda_{\text{DD},i}(t)=\log\Ex_i\left[e^{t T_d^n}\right]$, $i=0,1$. 
The same properties (P1)-(P5) in Prop. \ref{prop:LGMF} are satisfied by $\Lambda_{\text{DD},i}(t)$ and $\Lambda^*_{\text{DD},i}(x)$ considering $m^\text{DD}_i$ instead of $m^\text{CD}_i$, for $i=0,1$. Therefore, the error exponents are (\ref{DDfa}) and (\ref{DDm}).

\subsection{DD Optimum Energy Profile for the Miss Error Exponent}
\label{App:ProofMissExp}
If the false alarm probability constraint is $P_{fa}^n\leq\alpha<1$, the miss error exponent is given by Th. \ref{DDexp} with $\tau=m_0(\Gamma_\text{DD})+\epsilon$, where $\epsilon>0$ arbitrary small and the constraint over $P_{fa}^n$ is satisfied for $n$ large enough. See \cite[Prop. 2]{PoorFreqDet} for a detailed proof. This case allows to find the optimality of the PCS-MAC and PFS-MAC strategies among all orthogonal strategies for a fixed energy profile (see \cite[Th. 2]{Bianchi_2011}) and then, to obtain a closed form solution for the optimal energy profile $\xi(\nu)$. In this respect, 
we use variational calculus to solve the problem: 
\begin{align} %
\sup \kappa_m(\xi) \
\text{s.t. } &\xi(\nu)\geq 0\; \forall \nu \in\Theta_\beta,\label{eq:const1}\\
&\int_{\Theta_\beta}\xi(\nu)d\nu\leq c \label{eq:const2}
\end{align}
where $c=E_t/(\sigma^2_s+\sigma^2_v)$. 
Because the error exponent $\kappa_m(\xi)$ is an increasing function of $\xi$ (easily checked by computing the  fist derivative of its integrand), it is both a quasiconvex and quasiconcave (and thus quasilinear) function. The domain of the functional $\kappa_m(\xi)$ is all nonnegative functions, a convex set. The constraints of the problem are affine functions of $\xi$. Therefore, we have a quasiconvex optimization problem where the solution is not unique.
The Lagrangian is
\begin{align*}
L(\xi,\lambda,\nu)\!=\!  \int_{\Theta_\beta} \!\!\!\! \left\{-I(\xi(\nu)) - m(\nu)\xi(\nu)+ \lambda(\xi(\nu) -{c}/{\beta}) \right\}d\nu
\end{align*}
where $I(\xi(\nu))\!=\!\frac{\xi(\nu)\sigma^2_v + \sigma^2_w}{\xi(\nu)(\phi(\nu) + \sigma^2_v) + \sigma^2_w} +\log\frac{\xi(\nu)(\phi(\nu) + \sigma^2_v) + \sigma^2_w} {\xi(\nu)\sigma^2_v + \sigma^2_w} -1$ is the integrand of $\kappa_m(\xi)$. The scalar function $m(\nu)$ and the scalar constant $\lambda$ are the multipliers of Lagrange. 

The constraints (\ref{eq:const1}) and (\ref{eq:const2}) together with the following constraints are the Karush-Kuhn-Tucker necessary conditions for local extremes:  
\begin{itemize}
\item[$C_3)$] $\lambda^*-I'(\xi^*(\nu))\geq 0,\,\forall\nu\in\Theta_\beta$,
\item[$C_4)$] $[\lambda^*-I'(\xi^*(\nu))]\xi^*(\nu)=0,\,\forall\nu\in\Theta_\beta$,
\item[$C_5)$] $\lambda^*\geq 0$, 
\item[$C_6)$] $\lambda^* \int_{\Theta_\beta} (\xi^*(\nu)-c/\beta)d\nu =0$.
\end{itemize}
The Euler-Lagrange equation together with the complementarity condition of $m(\nu)$ and the solution to the problem $\xi^*(\nu)\defeq \xi_\text{OEP}(\nu)$ give the constraint $(C_4)$. The non-negativeness of $m(\nu)$ and $\lambda^*$ produce $(C_3)$ and $(C_5)$, respectively. Finally, $(C_6)$ is due to the complementarity of $\lambda^*$ and (\ref{eq:const2}).

$\xi(\nu)=0, \forall \nu$  and $\lambda=0$ satisfies all constraints but it is not the desired solution because the error exponent is 0. Then, for a non-trivial solution $\lambda^*>0$. It can be shown that the solutions to $I'(\hat{\xi})=\lambda^*$ correspond to the cubic equation (\ref{cubic}) with coefficients (\ref{eq:cubicCoeff}). Descartes' rule of signs of a polynomial establishes that then number of positive roots of a polynomial is related with the number of sign changes of the nonzero coefficients of consecutive powers. In the case of (\ref{eq:cubicCoeff}) and considering $(C_5)$, Descartes' rule determines that there could be at most 2 or 0 positive roots. Then, there exists a root, with a non-positive value, that does not satisfy (\ref{eq:const1}) and it is discarded.   

Define $\mathcal{I}_i=\{\nu\in[0,1]: \hat{\xi}_i(\nu)> 0\}$, $i=1,2$ as the sets of frequencies corresponding to the positive roots. Because of Descartes' rule, $\mathcal{I}_1\cap\mathcal{I}_2=\mathcal{I}_1=\mathcal{I}_2=\Theta_{\beta*} \subseteq \Theta_{\beta}$. We have two cases. If $\nu\notin\Theta_{\beta*}$, both $\hat{\xi}_1(\nu)$ and $\hat{\xi}_2(\nu)$ are non-positive and (\ref{eq:const1}) together with $(C_4)$ imply that $\xi^*(\nu)=0$. If $\nu\in\Theta_{\beta*}$, both $\hat{\xi}_1(\nu)$ and $\hat{\xi}_2(\nu)$ are positive and they are possible solutions. In fact, since $I(\xi(\nu))$ is an increasing function, $\xi^*(\nu)= \max\{\hat{\xi}_1(\nu), \hat{\xi}_2(\nu)\}$. In summary, $\xi^*(\nu)= \max\{\hat{\xi}_1(\nu), \hat{\xi}_2(\nu),\hat{\xi}_3(\nu),0\}$. A closed form for $\hat{\xi}^*(\nu)$ is shown in (\ref{xiSol}). $\lambda^*>0$ in $(C_6)$ implies that the energy constraint saturates, i.e., $\int_{\Theta_\beta}\xi^*(\nu)d\nu=c$ and this determines the value of $\lambda^*$. Finally, if $\nu\notin \Theta_{\beta*}$, $\xi(\nu)=0$ and $I(\xi(\nu))=0$. Then, we only need to integrate  $I(\xi(\nu))$ in $\Theta_{\beta*}$ to obtain the miss error exponent (\ref{eq:DDMiss}).

\bibliographystyle{IEEEtran}
\bibliography{IEEEabrv,../Refs/refs}

\end{document}